\providecommand\floor[1]{\lfloor#1\rfloor}
\providecommand\ceil[1]{\lceil#1\rceil}
\providecommand\abs[1]{\lvert#1\rvert}
\def\poly{\mathrm{poly}}
\newtheorem{theorem}{Theorem}
\newtheorem{claim}[theorem]{Claim}
\newtheorem{conjecture}[theorem]{Conjecture}
\DeclareMathOperator\pr{\mathrm{Pr}}
\DeclareMathOperator\E{\mathrm{E}}
\def\eps{\varepsilon}
\def\calC{\mathcal{C}}
\def\Sbar{\overline{S}}
\def\Tbar{\overline{T}}
\begin{document}
\title{A better tester for bipartiteness?}
\author{Andrej Bogdanov\footnote{{\tt andrejb@cse.cuhk.edu.hk}. Department of Computer Science and Engineering and Institute for Theoretical Computer Science and Communications, Chinese University of Hong Kong.}  \and Fan Li\footnote{{\tt fli@cse.cuhk.edu.hk}. Department of Computer Science and Engineering, Chinese University of Hong Kong.}}
\date{}

\maketitle

\begin{abstract}
Alon and Krivelevich (SIAM J. Discrete Math. 15(2): 211-227 (2002)) show that if a graph is $\eps$-far from bipartite, then the subgraph induced by a random subset of $\tilde{O}(1/\eps)$ vertices is bipartite with high probability. We conjecture that the induced subgraph is {\em $\tilde{\Omega}(\eps)$-far} from bipartite with high probability. Gonen and Ron (RANDOM 2007) proved this conjecture in the case when the degrees of all vertices are at most $O(\eps n)$. We give a more general proof that works for any $d$-regular (or almost $d$-regular) graph for arbitrary degree $d$.

Assuming this conjecture, we prove that bipartiteness is testable with one-sided error in time $O(1/\eps^c)$, where $c$ is a constant strictly smaller than two, improving upon the tester of Alon and Krivelevich. As it is known that {\em non-adaptive} testers for bipartiteness require $\Omega(1/\eps^2)$ queries (Bogdanov and Trevisan, CCC 2004), our result shows, assuming the conjecture, that adaptivity helps in testing bipartiteness.
\end{abstract}

\section{Introduction}

A graph on $n$ vertices is {\em $\eps$-far from bipartite} if it cannot be made bipartite even after removing an arbitrary set of $\eps\binom{n}{2}$ of its edges. A randomized algorithm is a {\em (one-sided) tester for bipartiteness} with distance parameter $\eps$ if, given oracle access to a graph $G$, the algorithm always accepts if $G$ is bipartite and rejects with probability at least $1/2$ if $G$ is $\eps$-far from bipartite.

The problem of testing bipartiteness was among the first ones considered in the framework of property testing. In their work which introduces combinatorial property testing, Goldreich, Goldwasser and Ron~\cite{GGR} give a tester for bipartiteness (of dense graphs) that runs in time $O(1/\eps^3)$. 

Later works~\cite{AFKS,AS} gave many examples of graph properties that are testable in time independent of the size of the graph. Alon et al.~\cite{AFNS} gave a combinatorial characterization of all such graph properties. However, although all these properties have testers whose running time is independent of the graph size, the running time is very large in terms of the distance parameter $\eps$. Typically this dependence is a tower of exponentials of height polynomial in $1/\eps$, and sometimes even worse. The super-exponential dependence on $\eps$ in these analyses owes to applications of the regularity lemma. In contrast, there are relatively few properties that are testable in time polynomial in $1/\eps$. Bipartiteness is perhaps the most natural example of such a property.

In most cases, the design of property testing algorithms for graphs is relatively straightforward; it is the analysis that is difficult. This can be partially explained by a theorem of Goldreich and Trevisan~\cite{GT} which says that any tester that makes at most $q$ queries can be simulated by a tester that chooses a subgraph on $O(q)$ vertices, makes all queries between the vertices and accepts if the induced subgraph satisfies a related property. Applying this canonical algorithm of Goldreich and Trevisan deteriorates the query complexity by at most a quadratic factor. In the case of superexponential bounds (typical in applications of the regularity lemma), a quadratic loss is not terribly relevant.\footnote{The deterioration in {\em running time} could be somewhat worse, but again such losses are not very relevant when using the regularity lemma.}

However, for bipartiteness testing, much sharper bounds on the query complexity are known. Alon and Krivelevich~\cite{AK} show that if a graph is $\eps$-far from bipartite, then with high probability a random sample of $\tilde{O}(1/\eps)$ vertices contains an odd-length cycle. This gives a bipartiteness tester that makes $\tilde{O}(1/\eps^2)$ queries, improving the query complexity of the original tester of Goldreich, Goldwasser and Ron by a factor of about $1/\eps$.

The Alon-Krivelevich tester is {\em non-adaptive} in the sense that its query sequence is independent of the graph in question (and only depends on the parameter $\eps$). Bogdanov and Trevisan~\cite{BT} show that any non-adaptive tester for bipartiteness must make $\Omega(1/\eps^2)$ queries into the graph. For general testers, they only prove a query complexity lower bound of $\Omega(1/\eps^{3/2})$.

It is natural to ask whether the Alon-Krivelevich tester can be improved. Gonen and Ron~\cite{GR} observe that the known lower bounds apply even for the subclass of graphs where the degree of every vertex is bounded by $O(\eps n)$. For this special class of graphs, they give an improved algorithm for testing bipartiteness that runs in time $\tilde{O}(1/\eps^{3/2})$. 

Gonen and Ron also give a different type of algorithm that tests bipartiteness for any graph where all vertices have degree {\em at least} $k\eps n$ ($1 \leq k \leq 1/\eps$) with running $\tilde{O}(1/k\eps^2)$. Although not stated explicitly in their paper, when taken together their results imply that bipartiteness is testable in time $O(1/\eps^{2-c})$ for some constant $c < 2$ for any $d$-regular graph, where $d$ can be arbitrary. (We explain this point below. In fact it is sufficient for the graph to be ``almost $d$-regular'' in the sense that the degrees of most vertices are between $d$ and $Kd$ for some constant $K$.)

Recently Goldreich and Ron~\cite{GolR09} gave the first examples of graph properties for which an adaptive tester provably outperforms the best non-adaptive one by a polynomial factor.\footnote{Although Gonen and Ron already show that adaptivity helps for testing bipartiteness of small-degree graphs, their result does not translate into a graph property because their notion of ``small degree'' depends on the distance parameter $\eps$.} They also give a candidate collection of properties which is conjectured to approach the optimal quadratic speedup for adaptive testers.

The question of testing bipartiteness has also been studied in the {\em sparse graph} model, where Goldreich and Ron~\cite{GolR98} give a tester whose running time on an $n$-vertex graph is $\tilde{O}(\sqrt{n})$ for any fixed degree $d$ and distance parameter $\eps$. Although their discussion focuses on the case of {\em constant} $d$ and $\eps$, their analysis in fact works even when $d$ and $1/\eps$ grow in terms of $n$ (although the bounds get worse). For a precise statement of this fact, see Theorem~\ref{thm:gr}.

Kaufman, Krivelevich, and Ron~\cite{KKR} gave upper and lower bounds for testing bipartiteness for general-degree graphs in a model that combines features of the sparse graph and dense graph models.

\subsection{Our result}

Recall that Alon and Krivelevich~\cite{AK} prove that if a graph is $\eps$-far from bipartite, then a random sample of $\tilde{\Omega}(1/\eps)$ vertices will contain an odd-length cycle with probability at least $1/2$. We conjecture the following generalization:

\begin{conjecture}
\label{conj:ak}
If a graph $G$ is $\eps$-far from bipartite, then with probability $1/2$ the induced graph on a random sample of $\tilde{O}(1/\eps)$ vertices is $\tilde{\Omega}(\eps)$-far from bipartite.
\end{conjecture}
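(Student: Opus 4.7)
My plan is to reduce the general-graph case to the almost-$d$-regular case established elsewhere in this paper, via a degree-bucketing decomposition. Partition $V$ into $L+1 = O(\log(n/\eps))$ degree classes $V_0,\ldots,V_L$, where $V_\ell = \{v : 2^\ell \delta_0 \le d(v) < 2^{\ell+1} \delta_0\}$ for a base threshold $\delta_0 = \Theta(\eps n)$, together with a ``sparse'' class of vertices of degree less than $\delta_0$. Vertices in the sparse class carry at most $\delta_0 n / 2 = O(\eps n^2)$ incident edges in total; by choosing the hidden constant in $\delta_0$ small enough, these contribute at most (say) a $1/4$ fraction of the $\eps n^2/2$ bad edges under every bipartition, so for \emph{every} bipartition $\pi$ of $V$ an $\Omega(1)$ fraction of the bad edges lies between pairs of dense classes.

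Next, I would argue that a single pair $(\ell_1,\ell_2)$ of dense classes is responsible for $\tilde{\Omega}(\eps)$ of the non-bipartiteness. Averaging over the $O(L^2) = \tilde{O}(1)$ class pairs, for each fixed $\pi$ some pair contains $\tilde{\Omega}(\eps n^2)$ bad edges. Supposing one can pick such a pair uniformly over $\pi$, the induced subgraph $H := G[V_{\ell_1} \cup V_{\ell_2}]$ is $\tilde{\Omega}(\eps)$-far from bipartite after rescaling by its own vertex count, and is almost $d$-regular by construction of the buckets. A uniform sample $S$ of $V$ of size $s = \tilde{O}(1/\eps)$ intersects $V_{\ell_1} \cup V_{\ell_2}$ in an approximately uniform sample of $H$, of expected size $\tilde{\Omega}(1/\eps)$ provided $|V_{\ell_1}|+|V_{\ell_2}| = \Omega(n/\poly\log n)$, which holds for at least one dense pair by edge-counting. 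Applying the almost-regular version of the conjecture to $H$ yields that the induced subgraph of $H$ on $S \cap V(H)$ is $\tilde{\Omega}(\eps)$-far from bipartite, and hence so is $G[S]$.

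The main obstacle I expect is the quantifier issue buried in the second step. The property ``$\eps$-far from bipartite'' is a maximum over $2^n$ bipartitions, and while pigeonhole identifies a witness class pair for each individual $\pi$, different $\pi$'s may select different pairs; yet we need $G[S]$ to be uncooperative against \emph{every} bipartition of $S$. A naive union bound over the $2^{|S|} = 2^{\tilde{O}(1/\eps)}$ sample bipartitions cannot be absorbed by the Chernoff-type concentration available here, since the deviation bound for bad-edge counts against a fixed $\pi$ is only $\exp(-\tilde{\Omega}(1/\eps))$ once high-degree vertices inflate the variance. I would attempt to circumvent this by either (i) a linear-programming relaxation, replacing ``$\eps$-far from bipartite'' by the dual guarantee of an $\tilde{\Omega}(\eps n^2)$-weight fractional packing of edge-disjoint odd closed walks and arguing that this packing survives restriction to $V_{\ell_1}\cup V_{\ell_2}$ and then to the sample; or (ii) a spectral argument on the signed adjacency matrix, localizing the spectral witness of non-bipartiteness to a block on which random sampling behaves as in the regular case. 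Either route requires genuinely new ideas beyond the bucketing skeleton, which is why I stated the general form only as a conjecture.
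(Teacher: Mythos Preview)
The paper does not prove Conjecture~\ref{conj:ak} for general graphs; it is explicitly left open, and only the $d$-regular (and almost $d$-regular) special case is established, as Theorem~\ref{thm:ak} in Section~\ref{sec:ak}. So there is no general-case proof in the paper to compare your attempt against, and your closing sentence---that the general form is stated only as a conjecture---matches the paper's own position.

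Your self-diagnosis of the obstruction in the bucketing reduction is accurate and is exactly why the general statement remains a conjecture. The pigeonhole over class pairs is partition-by-partition: for each bipartition $\pi$ of $V$ some pair $(\ell_1,\ell_2)$ absorbs $\tilde{\Omega}(\eps n^2)$ of the $\pi$-monochromatic edges, but nothing forces a \emph{single} pair to work uniformly over all $\pi$, and without that you cannot conclude that $G[V_{\ell_1}\cup V_{\ell_2}]$ is itself $\tilde{\Omega}(\eps)$-far from bipartite. Your suggested escapes (a fractional odd-walk packing that localizes to one bucket pair, or a spectral witness that does) are not carried out, and the paper offers no mechanism here either. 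A secondary gap: the claim $|V_{\ell_1}|+|V_{\ell_2}|=\Omega(n/\mathrm{polylog}\,n)$ does not follow from edge-counting alone, since a class of vertices of degree $\Theta(n)$ can carry $\Omega(\eps n^2)$ incident edges while containing only $O(\eps n)$ vertices; one would have to argue separately that the distance of the induced subgraph, rescaled to its own vertex count, grows enough to compensate for the smaller intersected sample when invoking Theorem~\ref{thm:ak}.

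For contrast, the paper's proof of the regular case does not proceed by any reduction of this kind. It writes the sample as $S=S_1\cup\dots\cup S_s\cup T$ with $s=O(k\log(1/\eps))$ small sets $S_i$ (each of expected size $O(\log(1/\eps)/k\eps)$, where $d=k\eps n$) and one larger set $T$; shows each $S_i$ dominates almost all of $G$; and then, for each $i$, uses a Janson-type tail bound together with a union bound over the $2^{|S_i|}$ partitions of $S_i$---affordable precisely because $|S_i|$ is small in the regular case---to show that with constant probability the pair $(S_i,T)$ is ``violating,'' meaning every partition of $S_i\cup T$ leaves many monochromatic edges either inside $T$ or between $S_i$ and $T$. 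Summing over enough violating pairs yields the distance lower bound for $G_S$. Regularity enters essentially, both in controlling $|S_i|$ and in the variance term of Janson's inequality, and the argument does not extend to mixed degrees.
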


Gonen and Ron~\cite{GR} prove this conjecture for graphs whose degrees are bounded by $K\eps n$ for some constant $K$. We give a more general proof (Theorem~\ref{thm:ak}) that works for the case where $G$ is $d$-regular (and even when all its vertex degrees are between $d$ and $Kd$ for a constant $K$). 

Conjecture~\ref{conj:ak} is a purely combinatorial statement about graphs, which by itself does not yield anything better than the original, nonadaptive tester of Alon and Krivelevich for testing bipartiteness in time $\tilde{O}(1/\eps^2)$. Assuming this conjecture, we prove that there exists an {\em adaptive} tester for bipartiteness that outperforms the best non-adaptive one:

\begin{theorem}
\label{thm:main}
Assume Conjecture~\ref{conj:ak}. Then bipartiteness is testable with one-sided error in time $O(1/\eps^c)$, where $c$ is a constant strictly less than 2.
\end{theorem}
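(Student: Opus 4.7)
The plan is to leverage Conjecture~\ref{conj:ak} to reduce bipartiteness testing of the input graph $G$ to bipartiteness testing of a $\tilde{O}(1/\eps)$-vertex induced subgraph that is still $\tilde{\Omega}(\eps)$-far from bipartite, and then exploit the Gonen-Ron testers, which (as noted in the excerpt) already yield $\tilde{O}(1/\eps^{2-c})$-query testers on any almost $d$-regular graph. The key step is to identify an almost-regular subgraph of $G$ on which to apply the conjecture, and then run the Gonen-Ron combined algorithm on the resulting small induced subgraph.

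First I would partition the vertices of $G$ into $O(\log n)$ degree classes via $\lfloor \log_2 \deg_G(v)\rfloor$. Using one degree query per sampled vertex, a pigeonhole argument identifies a dominant bucket $B_d = \{v : \deg_G(v) \in [d, 2d)\}$ such that $G[B_d]$ is almost $d$-regular and remains $\tilde{\Omega}(\eps)$-far from bipartite. I would then sample $\tilde{O}(1/\eps)$ vertices from $B_d$ (via rejection sampling from $V(G)$, whose overhead is controlled by the lower bound on $|B_d|$ forced by the farness condition) and apply Conjecture~\ref{conj:ak} to the almost $d$-regular graph $G[B_d]$. The resulting induced subgraph $G[S]$ is $\tilde{\Omega}(\eps)$-far from bipartite and, by standard Chernoff concentration, is almost $d'$-regular with $d' \approx d|S|/|B_d|$. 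Applying the Gonen-Ron combined tester to $G[S]$ then finishes the argument: the low-$d'$ regime invokes their max-degree algorithm for $\tilde{O}(1/\eps^{3/2})$ queries, while the high-$d'$ regime invokes their min-degree algorithm for $\tilde{O}(1/(k\eps^2))$ queries with $k = d'/(\eps'|S|) = \omega(1)$. Optimizing the crossover yields an overall query complexity $O(1/\eps^c)$ with a constant $c < 2$.

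The main obstacle will be the mid-degree regime, where on $G$ itself neither Gonen-Ron algorithm gives a polynomial saving over $\tilde{O}(1/\eps^2)$. The point that saves us is that after invoking Conjecture~\ref{conj:ak} degrees in $G[S]$ are compressed by a factor of $|S|/|B_d|$, so the awkward range $d \approx \eps n$ in $G$ becomes the benign range $d' \approx O(1)$ in $G[S]$, where Gonen-Ron's max-degree algorithm succeeds with a polynomial improvement. A secondary subtlety will be calibrating the sampling overhead for buckets with $|B_d| \ll n$ and preserving one-sided error; the latter holds because the only events triggering rejection are discoveries of odd cycles, a property monotone under taking induced subgraphs, and the degree bucketing is tolerant to small estimation errors via constant slack in the bucket widths.
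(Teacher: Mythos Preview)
Your proposal has a genuine gap at the pigeonhole step. The claim that some degree bucket $B_d$ yields an induced subgraph $G[B_d]$ that is still $\tilde{\Omega}(\eps)$-far from bipartite is false in general: the non-bipartiteness of $G$ can live entirely in edges \emph{between} degree classes. The paper's own illustrative example (in the proof overview in the introduction) witnesses this: take $n/2$ vertices of degree about $\sqrt{n}$ forming a blowup of an \emph{even} cycle, together with $n/2$ constant-degree vertices attached at random to the high-degree part. The high-degree bucket induces a bipartite graph, the low-degree bucket induces an edgeless graph, yet the whole graph is $\tilde{\Omega}(1/n)$-far from bipartite. No pigeonhole over $O(\log n)$ buckets rescues this, because distance from bipartiteness is a minimum over all $2$-colorings, and different colorings can push the violating edges into different bucket pairs. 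A secondary issue: vertices in $B_d$ have degree roughly $d$ in $G$, but their degree in $G[B_d]$ can be anywhere from $0$ to $2d$, so $G[B_d]$ need not be almost regular at all---it only has bounded maximum degree---which also undermines the later claim that the sampled $G[S]$ is almost $d'$-regular.

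The paper's proof is structured precisely to handle such cross-class interactions, and it never tries to isolate a regular piece. After applying Conjecture~\ref{conj:ak} once to the whole input, it fixes a single threshold $d$, runs the high-degree algorithm of Section~\ref{sec:highdeg} to either find an odd cycle or obtain an approximate spanning forest of $G_H$, and then \emph{eliminates} the high-degree vertices: each high-degree component is replaced, in an online query-efficient way, by a random (hence expanding) graph of degree $O(\log n)$ on its low-degree boundary, producing an XOR game on $L$ alone (Section~\ref{sec:eliminate}, Claims~\ref{claim:spec}--\ref{claim:corr}). This expander replacement is exactly the device that transfers cross-class non-bipartiteness into a purely low-degree instance, and it is the idea missing from your plan. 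After the XOR-to-bipartiteness reduction of Section~\ref{sec:bipartite}, the Goldreich-Ron bounded-degree tester finishes; balancing its $\tilde{O}(\sqrt{n})\cdot d^K$ cost against the $\tilde{O}(n^2/d)$ cost of the high-degree phase gives the exponent $c<2$.
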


\paragraph{On the conjecture} Besides our proof that the conjecture is true for regular graphs, more evidence in favor of it comes from work of Fischer and Newman \cite{FN}. It follows from their work that there exists a function $f$ such that if $G$ is $\eps$-far from bipartite, thaen with high probability a random subgraph on $f(\eps)$-vertices will be $\tilde{\Omega}(\eps)$-far from bipartite. We know that $f(\eps)>1/K\eps$ for some absolute constant $K>0$ for instance from \cite{BT}. Conjecture 1 postulates that this bound is nearly tight, namely $f(\eps)=\tilde{O}(1/\eps)$.

We believe that Conjecture 1 is interesting in its own right. A similar phenomenon has been investigated in the context of probabilistically checkable proofs (PCPs) \cite{BSGH, DR}. Roughly, a PCP is {\em robust} if a proof which is $\eps$-far from a correct one must encode an assignment which is $\Omega(\eps)$-far from a satisfying one. Conjecture 1 says, in some sense, that bipartiteness is robust under takingrandom subgraphs on $\tilde{O}(1/\eps)$ vertices.

Even if the conjecture turns out to be false, we believe that coming up with a counterexample to it would be enlightening in understanding the complexity of testing bipartiteness.

\paragraph{Comparison with the work of Gonen and Ron} Gonen and Ron give two algorithms for testing bipartiteness that apply to two different classes of graphs. Their first algorithm runs in time $\tilde{O}(\eps^{3/2})$, but only applies to graphs where all degrees are at most $O(\eps n)$. Their second algorithm runs in time $\tilde{O}(1/k\eps^2)$, but only applies to graphs where all degrees are at least $k\eps n$.

As we explain below, the two algorithms of Gonen and Ron are fundamentally different from one another and there is no a priori reason why we should expect a single algorithm for testing bipartiteness in time $O(1/\eps^{2-c})$ that works for all graphs, In this work we provide such an algorithm, and we show that if Conjecture 1 is true, our algorithm is correct on all inputs.

\subsection{The algorithms of Gonen and Ron}
Our proof is inspired by the algorithms proposed by Gonen and Ron~\cite{GR} for testing bipartiteness in graphs with some restriction on the degrees of the vertices. Gonen and Ron propose two such algorithms: One for ``low-degree'' graphs and another one for ``high-degree'' graphs.

Before we explain the two algorithms, let us apply the conjectured generalization of the Alon-Krivelevich result. This generalization allows us to replace a graph on $n$ vertices by one on $\tilde{O}(1/\eps)$ vertices, while roughly preserving the distance from bipartiteness. 

Applying this conjecture comes at no loss in generality (if we disregard polylogarithmic factors in $1/\eps$): If testing bipartiteness {\em requires} $q(\eps)$ queries on graphs of size $s=\tilde{O}(1/\eps)$, then it requires the same number of queries on graphs on $n$ veritces whenever $n$ is a multiple of $s$. The counterexample of size $s$ can be scaled up to size $n$ by performing an $\eps n$-blowup of it.\footnote{A {\em $k$-blowup} of a graph $G$ is obtained by replacing every vertex of $G$ by a group of $k$ vertices and every edge of $G$ by a complete $k$ by $k$ bipartite graph between the corresponding groups of vertices.} It is not difficult to see that this blowup operation preserves the distance to bipartiteness.\footnote{One shows that the optimal partitions always assign the same color to all vertices within a group. For general graph properties, however, this is not always the case \cite{GKNR}.}

After applying the conjecture, let us use $n$ to denote the number of vertices in the induced subgraph. Then (with high probability) this graph $G$ will be $\tilde{\Omega}(1/n)$-far from bipartite. For convenience, we will make a change in terminology: We say that a graph is {\em $m$-removed} from bipartite if one cannot make it bipartite even after removing $m$ arbitrary edges. Then the graph $G$ will be $\tilde{\Omega}(n)$-removed from bipartite. We have now reduced the task to designing a tester for bipartiteness in $G$ that runs in time $O(n^{2 - c})$ for some constant $c$. This algorithm only needs to outperform the trivial algorithm by a small factor in the running time. However, when $G$ is far from bipartite, it is only guaranteed to have $\tilde{\Omega}(n)$ violating edges -- about one per vertex.

To explain the low-degree and high-degree algorithms, it is useful to keep in mind these two example graphs on $n$ vertices that are $\Omega(n)$-far from bipartite:
\begin{itemize}
\item $G_1$: A constant-degree expander -- for example, a collection of $d$ random matchings for some constant $d$.
\item $G_2$: A $\sqrt{n}$-blowup of an odd length cycle of length about $\sqrt{n}$.
\end{itemize}

\paragraph{The low degree algorithm}
When the maximum degree $d$ of $G$ is much smaller than $n$, we can test bipartiteness by emulating the Goldreich-Ron algorithm for bounded degree graphs. In the {\em adjacency lists} model with distance parameter $\delta$, the Goldreich-Ron algorithm runs in time $\tilde{O}(\sqrt{n})\cdot 1/\delta^K$, where $K$ is some constant. If a graph $G$ of maximum degree $d$ is $\tilde{\Omega}(n)$-removed from bipartite in the adjacency matrix model, then $G$ is $1/d\cdot \poly\log(n)$-far from bipartite in the adjacency list model, and we have the following consequence.

\begin{theorem}[The Goldreich-Ron algorithm]
\label{thm:gr}
Let $G$ be a graph on $n$ vertices maximum degree $d$. Suppose we are given an adjacency list representation of $G$. Then there is an algorithm that runs in time $\tilde{O}(\sqrt{n}) \cdot d^K$ such that (1) if $G$ is bipartite, the algorithm says ``bipartite'' and (2) if $G$ is $\tilde{\Omega}(n)$-removed from bipartite, the algorithm outputs an odd-length cycle of $G$ with probability $9/10$.
\end{theorem}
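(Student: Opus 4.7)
My plan is to invoke the Goldreich--Ron bounded-degree bipartiteness tester \cite{GolR98} as a black box, with a careful translation between the distance parameter used in the adjacency matrix model (number of edges to remove) and the distance parameter $\delta$ used in the adjacency list model, where $\delta$-far conventionally means more than $\delta d n$ edges must be removed (out of at most $d n / 2$ present). The key observation is that if $G$ has maximum degree $d$ and is $m$-removed from bipartite, then it is at least $(2m/dn)$-far from bipartite in the adjacency list sense. Substituting $m = \tilde{\Omega}(n)$ gives an effective adjacency list distance of $\delta = \tilde{\Omega}(1/d)$, up to polylogarithmic factors in $n$.

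First I would run the Goldreich--Ron tester with distance parameter $\delta = \tilde{\Omega}(1/d)$ on the given adjacency list. Their algorithm selects $\tilde{O}(\sqrt{n})$ random starting vertices, performs two random walks of length $\poly(\log n, 1/\delta)$ from each, and reports success when two walks from the same source meet at a common endpoint via paths of differing parities. Any such collision yields an odd closed walk; a standard linear-time cycle extraction on that walk isolates an actual odd cycle, which can be output as the one-sided certificate of non-bipartiteness. The total running time works out to $\tilde{O}(\sqrt{n}) \cdot \poly(1/\delta) = \tilde{O}(\sqrt{n}) \cdot d^{K}$ for the constant $K$ equal to the polynomial dependence in the Goldreich--Ron analysis.

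Part (1) of the theorem is immediate: a bipartite graph contains no odd closed walk, so the tester never detects a collision with parity difference and therefore never outputs a cycle. Part (2) follows directly from the correctness of \cite{GolR98}, which guarantees that when the input is $\delta$-far from bipartite in the adjacency list sense, the parity-collision event occurs with probability at least $9/10$. The only nontrivial step is the distance conversion; the main obstacle (a minor one) is simply bookkeeping the polylogarithmic slack hidden in the hypothesis ``$\tilde{\Omega}(n)$-removed'', which degrades the effective $\delta$ only by polylog factors in $n$ and is therefore absorbed into the $d^{K}$ term of the running time.
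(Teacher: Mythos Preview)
Your proposal is correct and mirrors the paper's treatment: the paper does not give a standalone proof of this theorem but simply observes, in the sentence preceding the statement, that a graph which is $\tilde{\Omega}(n)$-removed from bipartite is $1/(d\cdot\poly\log n)$-far from bipartite in the adjacency list model, and then invokes the Goldreich--Ron tester (running time $\tilde{O}(\sqrt{n})\cdot(1/\delta)^K$) as a black box---exactly the distance conversion you carry out.

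Two minor remarks. First, your informal description of the Goldreich--Ron algorithm (``$\tilde{O}(\sqrt{n})$ starting vertices, two walks from each'') does not match the actual algorithm, which picks $O(1/\delta)$ starting vertices and performs $\poly(\log n/\delta)\cdot\sqrt{n}$ walks from each; since you are using it as a black box this does not affect correctness, but you may want to fix the description. Second, the polylog slack from ``$\tilde{\Omega}(n)$-removed'' is absorbed by the $\tilde{O}(\cdot)$ on the $\sqrt{n}$ factor, not by $d^K$ (which could be a constant when $d=O(1)$).
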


For intuition, we briefly outline the Goldreich-Ron algorithm: This algorithm chooses $O(1/\delta)$ random starting vertices $v$, then performs $\poly(\log(n)/\delta)\cdot \sqrt{n}$ random walks out of $v$ of length $\poly(\log(n)/\delta)$ and looks for pairs of random walks -- one of even length, the other one of odd length -- that start and end at the same vertex, thus revealing an odd-length cycle. For a graph like $G_1$, this algorithm works well because the random walks mix within $O(\log n)$ steps and we expect two of them to collide after about $\sqrt{n}$ attempts. We expect the length of the walks to be about evenly divided between even and odd, so in the first collision an odd-length cycle is likely to be revealed.\footnote{This crude explanation misses many important ideas of the Goldreich-Ron algorithm, and we refer the reader to their paper~\cite{GR} for a complete presentation.} 

In the theorem, $K$ is some universal constant which Goldreich and Ron do not attempt to optimize. Since every query in the adjacency list model can be emulated by $n$ queries in the adjacency matrix model, this algorithm runs in time $\tilde{O}(n^{3/2})/d^K$ for graphs given by adjacency matrices. Notice that the Goldreich-Ron algorithm outperforms the trivial algorithm not only when $d$ is constant, but also when $d$ is a sufficiently small power of $n$. However, when the degree becomes very large -- for example as in the graph $G_2$ -- then it is not difficult to see that the running time of the algorithm can easily become $\tilde{\Omega}(n^2)$ in the adjacency matrix representation.

\paragraph{The high degree algorithm}
Now assume the degrees of all the vertices are at least $d$. This algorithm chooses a random set $S$ of vertices of size about $O((n/d) \log n)$, queries all their neighbors for edges, makes another $O((n^2/d) \log n)$ random pair queries, and looks for an odd-length cycle. The running time of this algorithm is $\tilde{O}(n^2/d)$. Instead of explaining why this algorithm works in general, let's look at the representative graph $G_2$. Here, even in the first phase of the algorithm, we are likely to have sampled at least one vertex in every location of the cycle, so after querying all their neighbors we see at least one copy of the odd-length cycle.

On the other hand, the high-degree algorithm is unlikely to work well on graphs like $G_1$ with constant-degree vertices: After the first phase, the graph has seen many isolated vertices (and their neighbors) in the graph, and it is quite unlikely that a cycle will be revealed in the second phase.

\subsection{Our proof}
It follows from our discussion that there is some critical degree $d = n^a$, where $a$ is a constant, with the following property: If all vertices have degree less than $d$, then the low-degree algorithm outperforms the trivial one; and if all vertices have degree greater than $d$, then the high-degree algorithm outperforms the trivial one. So it is natural to try and combine both algorithms and get a single one that works for all graphs.

Now consider a graph $G$ that is $m = \tilde{\Omega}(n)$-removed from bipartite and has both low-degree and high-degree vertices. Intuitively, if $G$ is $m$-removed from bipartite, it could be for three reasons: Either there are many violating edges in the high-degree part of $G$, or there are many violating edges in the low-degree part, or there are many violating edges between the two parts. (To illustrate the third case, consider a graph that has $n/2$ vertices of degree about $\sqrt{n}$ that are connected by a blowup of a cycle of {\em even} length about $\sqrt{n}$, and $n/2$ vertices of constant degree that are randomly connected to these high-degree vertices.)

The high-degree algorithm lets us take care of the first case: If we run the high-degree algorithm on $G$ and it detects no odd cycle, then we can be fairly confident that the high-degree part of $G$ is close to bipartite. But in fact we prove that the high-degree algorithm does more: In the case the algorithm fails to detect an odd cycle, its queries reveal an {\em approximate spanning forest} of the high-degree part of $G$: That is, after removing $m/2$ edges, we know exactly which vertices in the high-degree part must be of the same color and which must be of different colors. We prove this in Section~\ref{sec:highdeg}.

Once we know this approximate partition of the high-degree part, we can try to run the low-degree algorithm, but ``short-cut'' its random walks once they enter the high-degree part. For instance, suppose we have two low-degree vertices $v, v'$ that connected, respectively, to high-degree vertices $w, w'$. If the queries of the high-degree algorithm revealed that $w$ and $w'$ are of different colors, then we can redirect a random walk that attempts to go from $v$ to $w$ directly into $w'$. 

To be a bit more formal, the process we just described is an {\em emulation} of the graph $G$ by a new graph $G'$. The graph $G'$ will contain only those vertices of $G$ that have low degree, all the edges between them, and some extra edges that are meant to capture the connectivity information among the high-degree vertices of $G$. For the emulation to be correct, it must satisfy four requirements: (1) If $G$ is bipartite, so is $G'$; (2) if $G$ is far from bipartite, so is $G'$; (3) queries in $G'$ can be emulated efficiently using queries in $G$; and (4) even after adding the extra edges, all vertices in $G'$ remain of low degree.

This emulation of $G$ by $G'$ is explained in detail in Section~\ref{sec:eliminate}. Ignoring many relevant issues, the basic idea is to replace each ``component'' of the high-degree part of $G$ revealed in the first phase by a {\em random graph} of (about) constant degree on the low-degree vertices it connects to which is consistent with the partition. Since a random graph is likely to be an expander, it turns out that this transformation preserves distance from bipartiteness. On the other hand, since edges in a random graph can be generated ``on the fly'' by sampling, the emulation can be performed in a query-efficient manner.

For somewhat technical reasons, it will be easier for $G'$ to be an instance of a problem which is a generalization of bipartiteness. This problem, which we call an {\em XOR game} (borrowing from PCP terminology), is a binary constraint satisfaction problem with two kinds of edge-constraints: $\neq$-constraints and $=$-constraints which say that the colors of the endpoints should be different and equal, respectively.

Finally, in Section~\ref{sec:bipartite} we show that an XOR game can be transformed into an ordinary instance of bipartiteness without significantly affecting the degree and the query complexity, while roughly preserving the distance. 

To summarize, the transformations we have described allow us to take a graph which can have vertices of arbitrary degrees and either locate an odd-length cycle in the high-degree graph, or transform the graph into a low-degree graph which roughly preserves distance from bipartiteness. All these transformations are done in a query-efficient way. We prove Theorem~\ref{thm:main} in Section~\ref{sec:main}.

\subsection{Notation}
A graph $G$ is {\em $m$-removed from bipartite} if even after removing any subset of at most $m$ edges from $G$, the resulting graph is not bipartite. A graph is {\em $\eps$-far from bipartite} (in the dense graph model) if it is $\eps\binom{n}{2}$-removed from bipartite.

For a graph $G$ and a subset of vertices $S$, we write $G_S$ for the induced subgraph of $G$ on $S$.

Let $G$ be a graph, and $G'$ a subgraph of $G$ (possibly on a subset of the vertices). A {\em spanning forest} of $G'$ in $G$ is a forest $F$ of $G$ such that every two vertices in the same component of $G'$ are connected by a path in $F$. When $G' = G_S$, abusing terminology we will write ``spanning forest of $S$ in $G$.''

An {\em XOR game} is a graph where each edge is labeled by $=$ or $\neq$. The game is {\em satisfiable} if there is a 2-coloring of the vertices that satisfies all the edge labels, and it is {\em $m$-removed from satisfiable} if it cannot be made satisfiable even after removing any set of at most $m$ labels.

We use $E_G(S, T)$ for the set of edges between $S$ and $T$ in $G$. We say $G$ is an {\em $\alpha$-expander} if $\abs{E_G(S, \Sbar)} \geq \alpha\abs{S}$ for every set $S$ containing at most half the vertices.

In the {\em dense graph model} we are given access to the adjacency matrix of the graph. In the {\em sparse graph model} we are given adjacency lists for every vertex.

\section{Splitting the vertices by degree}
\label{sec:split}

We give a simple, sampling-based algorithm that splits the vertices of $G$ into those whole degree is roughly smaller than $d$ and those whose degree is roughly larger than $d$. This algorithm runs in time $\tilde{O}(n^2/d)$.

\begin{quote}
{\bf Algorithm Degree}: For every vertex $v$, query $(v, w)$ for $24n\log n/d$ random $w$s. If more than $24\log n$ edges are detected, put $v$ in $H$. Otherwise, put $v$ in $L$.
\end{quote}

\begin{claim}
With probability at least $9/10$, every vertex in $H$ has degree at least $d/2$, and every vertex in $L$ has degree at most $2d$.
\end{claim}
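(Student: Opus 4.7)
The plan is a textbook Chernoff-plus-union-bound argument. Fix a vertex $v$ and let $X_v$ denote the number of edges at $v$ discovered by the algorithm. Modeling the $m := 24n\log n/d$ queries for $v$ as independent samples, $X_v$ is a sum of $m$ independent Bernoullis of probability $\deg(v)/n$ (up to a negligible $1/(n-1)$ correction from sampling $w \ne v$), so $\mu_v := \E[X_v] = 24\log n \cdot \deg(v)/d$. The algorithm misclassifies $v$ only in one of two ways: case (i) $\deg(v) > 2d$ yet $X_v \leq 24\log n$, which places $v$ in $L$; or case (ii) $\deg(v) < d/2$ yet $X_v > 24\log n$, which places $v$ in $H$. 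Degrees in the intermediate range $[d/2, 2d]$ are consistent with either outcome, so no bound is needed there.

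For case (i), $\mu_v > 48\log n$ and the bad event is $X_v \leq \mu_v/2$. The multiplicative Chernoff lower tail gives $\Pr[X_v \leq \mu_v/2] \leq \exp(-\mu_v/8) \leq n^{-6}$. For case (ii), $\mu_v < 12\log n$ and the bad event is $X_v > 2\mu_v$. I would apply the Poisson-form upper tail $\Pr[X_v \geq t] \leq e^{-\mu_v}(e\mu_v/t)^t$ with $t = 24\log n$; a one-line calculation shows that the right-hand side is monotonically increasing in $\mu_v$ on $[0,t]$, so it suffices to evaluate at the worst case $\mu_v = 12\log n$, which yields a polynomially small bound in $n$.

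The only (minor) point to watch is exactly this uniformity of the upper tail across the full range $0 \leq \deg(v) < d/2$, including tiny degrees where $\mu_v \ll \log n$; the monotonicity observation collapses the whole range to a single boundary computation, so there is no real obstacle. Summing the two failure probabilities and union-bounding over the $n$ vertices gives a total failure probability of at most $n \cdot n^{-c}$ for some constant $c > 1$, which is well below $1/10$ for $n$ sufficiently large; for small $n$ the constant $24$ appearing in the algorithm can be enlarged to compensate.
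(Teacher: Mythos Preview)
Your proposal is correct and follows essentially the same Chernoff-plus-union-bound approach as the paper. The paper is slightly looser---it asserts a two-sided multiplicative deviation bound uniformly for all $v$, which is not literally $1-1/10n$ when $\mu_v$ is tiny---whereas you handle the small-$\mu_v$ upper tail explicitly via the monotonicity of the Poisson-form bound; this extra care is harmless and the two arguments are otherwise identical.
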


The claim follows easily from a large deviation bound. Let $d(v)$ be the degree of $v$. Each $w$ is a neighbor of $v$ independently at random with probability $d(v)/n$, so the expected number of neighbors among the queried $w$s is $\log n \cdot d(v)/d$. By a Chernoff bound we get that the number of neighbors is in the range $(24 \pm 12) \log n (d(v)/d)$ with probability $1 - 1/10n$. If this event holds for a particular vertex, then a vertex of degree more than $2d$ cannot end up in $L$, and a vertex of degree less than $d/2$ cannot end up in $H$. The conclusion follows by a union bound over all $v$.

\section{The algorithm for high degree vertices}
\label{sec:highdeg}

We now describe the algorithm for the high-degree part of the graph. The goal of this algorithm is to either detect a violation of bipartiteness in this part of the graph, or find an approximate spanning forest.

\begin{quote}
{\bf Algorithm High-degree}: Choose $s = 10n\log n/d$ random vertices $v$ of $G$ and query $(v, w)$ for every $w$.
Query an additional $t = 40n^3(\log n)^2/md$ random pairs $(u, u')$.
\end{quote}

\begin{claim}
\label{claim:highdeg}
Assume $G$ is $m$-removed from bipartite. Let $Q$ be the set of queries performed by Algorithm 1. Then with probability at least $9/10$, either $Q$ contains an odd cycle, or there is a set of at $m/2$ edges such that after removing these edges from $G$, we obtain a graph $G''$ such that $G''_H$ is bipartite and $Q$ contains a spanning forest of $G''_H$ in $G$.
\end{claim}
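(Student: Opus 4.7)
My proof plan has three stages: establish that Phase~1 covers $H$ adequately, define a candidate 2-coloring $\chi$ and bad-edge set $E^{\mathrm{bad}}$ that satisfies both the bipartiteness and spanning-forest requirements, and then use Phase~2's random pair queries to bound $|E^{\mathrm{bad}}|$.

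\emph{Stage 1 (coverage from Phase~1).} A standard Chernoff/union bound shows that, with probability at least $19/20$ over the random sample $S$, every $v \in H$ lies in $S$ or has a neighbor in $S$. Indeed, since $d(v) \geq d/2$ for $v \in H$ and $|S| = 10 n \log n/d$, the probability that $v$ together with all its neighbors avoid $S$ is at most $(1 - d/(2n))^s \leq e^{-5 \log n} = n^{-5}$; a union bound over $H$ yields failure probability $\leq n^{-4}$. Consequently, every vertex of $H$ lies in a connected component of $(V, E_1)$ that contains at least one sampled vertex, so the bipartite structure of $(V, E_1)$ pins down the coloring of each $v \in H$ up to a per-component flip.

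\emph{Stage 2 (candidate bipartition and spanning forest).} Assume $Q$ contains no odd cycle, otherwise the claim holds. Then $(V, Q)$ is bipartite. I take $\chi$ to be a 2-coloring of $V$ respecting this bipartition, with the per-component flip chosen to minimize the number of $\chi$-monochromatic edges in $G_H$. I then set $E^{\mathrm{bad}}$ to be the union of the $\chi$-monochromatic edges of $G_H$ with all $G_H$-edges whose endpoints lie in distinct components of $(V, Q)$. Setting $G'' := G \setminus E^{\mathrm{bad}}$, the graph $G''_H$ retains only $\chi$-consistent edges within a single $(V, Q)$-component, so it is bipartite; moreover, each component of $G''_H$ is contained in a single component of $(V, Q)$, so any spanning forest of $(V, Q)$ (whose edges lie in $Q$) is also a spanning forest of $G''_H$ in $G$.

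\emph{Stage 3 (bounding $|E^{\mathrm{bad}}|$ via Phase~2).} The remaining task is to show $|E^{\mathrm{bad}}| \leq m/2$ with high probability. Each edge of $G$ is hit by Phase~2 with probability $p = t/\binom{n}{2} = \Theta(n(\log n)^2/(md))$. Decomposing $|E^{\mathrm{bad}}|$ into intra-component monochromatic edges and inter-component edges, a Chernoff bound shows that if the intra-component monochromatic contribution exceeds $m/4$, Phase~2 hits at least one such edge with probability $1 - e^{-\Omega((\log n)^2)}$, and any such sampled edge together with the $E_1$-path between its endpoints completes an odd cycle in $Q$. The subtler case is inter-component edges: a single sampled inter-component edge only merges two components of $(V, Q)$ and need not produce an odd cycle on its own.

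\textbf{Main obstacle.} The hardest part of the proof is the inter-component case in Stage~3. I expect to formalize it as an XOR game in which the super-vertices are the connected components of $(V, E_1)$ and the super-edges are the inter-component edges of $G_H$, labeled $=$ or $\neq$ according to the relative colorings of the two endpoint components. Then $\min_\chi |E^{\mathrm{bad}}|$ equals the distance of this XOR game from satisfiability, and the key technical lemma to prove is an Alon--Krivelevich-style statement adapted to XOR games: random sampling of super-edges at density $p$ detects unsatisfiability of any XOR game that is more than $m/2$-removed from satisfiable, so that if $|E^{\mathrm{bad}}|>m/2$, Phase~2 would with high probability have revealed an odd cycle in $Q$ via some combination of sampled super-edges and $E_1$-paths. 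I expect this XOR-game analysis, possibly combined with a random-graph connectivity argument on the super-graph to bound the number of surviving $(V, Q)$-components, to be the main technical step.
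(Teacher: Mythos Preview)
Your Stage~1 (domination of $H$ by $S$) matches the paper. The genuine gap is in Stage~3, specifically your claim that ``$\min_\chi |E^{\mathrm{bad}}|$ equals the distance of this XOR game from satisfiability.'' This is false: the XOR-game distance counts only $\chi$-monochromatic $G_H$-edges, whereas $E^{\mathrm{bad}}$ also contains every \emph{bichromatic} $G_H$-edge whose endpoints lie in distinct $(V,Q)$-components. Those edges must be deleted for $Q$ to span $G''_H$, yet in the XOR game they are satisfied constraints and hence invisible; sampling such an edge in Phase~2 merely merges two components without creating any odd cycle. So your XOR-game detection lemma, even if proved, bounds only the ``bipartiteness'' half of $E^{\mathrm{bad}}$ and says nothing about the ``spanning-forest'' half. (There is also some drift between $(V,E_1)$- and $(V,Q)$-components across Stages~2--3 that introduces circularity in the intra-component Chernoff step, but that is secondary.)

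The paper takes a different route that handles the spanning-forest requirement directly. It first case-splits on whether $G_H$ is $m/4$-removed from bipartite. If so, a union bound over the $2^s$ bipartitions of $S$ (each inducing a bipartition of $H$ via domination) shows that for every one of them Phase~2 samples a monochromatic edge, forcing an odd cycle in $Q$. If not, the paper first deletes $\le m/4$ edges to make $G_H$ bipartite, and then --- this is the idea you are missing --- controls inter-component edges by a \emph{cut} argument rather than a coloring argument: there are at most $2^s$ cuts of $H$ respecting the $S$-induced partition $\mathcal{H}$, and a union bound shows that every such cut carrying more than $m/(4\lceil\log n\rceil)$ edges is hit by Phase~2. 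Consequently every cut separating two surviving $(V,Q)$-components is sparse, and one disconnects all of them using $\lceil\log n\rceil$ shattering cuts at total cost $\le m/4$. Your ``random-graph connectivity on the super-graph'' hint points in the right direction but does not deliver this bound; the shattering step is what converts ``each individual cut is sparse'' into ``at most $m/4$ edges removed in total.''
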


Let $S$ be the set of $n\log n/d$ random vertices $v$. With probability $19/20$, $S$ dominates $H$: The probability that any fixed vertex of $H$ is not dominated is at most $(1 - d/2n)^s \leq 1/20n$, so by a union bound $S$ dominates $H$ with probability $19/20$.

Now assume $S$ dominates $H$. We split the analysis in two cases. First, consider what happens if $G_H$ is $m/4$-removed from bipartite. Then we show that with probability $19/20$, $Q$ will contain an odd cycle. Consider any fixed bipartition of $S$. This partition induces a bipartition on $H$, which must have at least $m/4$ violating edges. The probability that one of the random pairs $(u, u')$ hits a violating edge is then $(1 - m/4n^2)^t < 2^{-s}/20$. By a union bound over all partitions of $S$, it follows that all of the induced partitions on $H$ have a violating edge in the sample with probability at least $19/20$. If this is the case, $Q$ must contain an odd cycle.

Now let's assume $G_H$ is $m/4$-close to bipartite. Delete at most $m/4$ violating edges to make it bipartite and call the resulting graph $G'$. We now focus attention on $G'$ and $G'_H$.

Consider the following partition $\mathcal{H} = \{H_1,\dots,H_h\}$ ($h \leq s$) of $H$ induced by $S$: For every $v_i \in S$, let $H_i$ be the set of those vertices in $H$ connected to $v_i$ in $G'$ but not to any of $v_1,\dots,v_{i-1}$. We show that with probability $19/20$, any cut $C$ that preserves $\mathcal{H}$ (i.e. $C$ does not split any set $H_i$) with at least $m/4\ceil{\log n}$ edges in $G'_H$  contains an edge sampled in the second stage. Consider a fixed such cut $C$. The probability that no edge in $C$ is picked in the second stage is at most $(1 - m/n^2\log n)^t < 2^{-s}/20$. We conclude by taking a union bound over all such $C$, as there are at most $2^s$ of them.

Now assume that every cut $C$ that preserves $\mathcal{H}$ has either an edge in $Q$ or at most $m/4\ceil{\log n}$ edges in $G'_H$. Look at any maximal forest inside $Q \cap E(G')$. The partition of $H$ into components induced by this forest (i.e. two vertices in $H$ are in the same component if they are connected in $Q$) is a coarsening of the partition $\mathcal{H}$. Let's call it $\mathcal{H'}$. So every cut $C$ that preserves $\mathcal{H'}$ has at most $m/4\log n$ edges in $G'_H$. Now take any collection of $\ceil{\log_2\abs{\mathcal{H'}}}$ such cuts that shatter $\mathcal{H'}$ (that is, every pair of sets in $\mathcal{H'}$ is partitioned by at least one of the cuts). Removing all the edges in all these cuts will disconnect every pair of sets of $\mathcal{H'}$ in $G'_H$. The total number of removed edges is at most $(m/4\ceil{\log n}) \cdot \ceil{\log_2\abs{\mathcal{H'}}} \leq m/4$. After removing all these edges we obtain a new graph $G''$ such that $Q$ becomes a spanning forest of $G''_H$ in $G$. Moreover, $G''$ is obtained from $G$ by removing at most $m/4 + m/4 = m/2$ edges.

\section{Eliminating the high degree vertices}
\label{sec:eliminate}

We come to the step of replacing the high-degree vertices of $G$ with ``equivalent'' edges in the low-degree part of $G$. The idea is to replace each (sufficiently ``well-connected'') component in the high-degree part of the graph with an expander graph that has the same connectivity properties (namely, edges of the expander go only among vertices that are on opposite sides of the partition induced by the trees in the spanning forest). Let $G'$ denote the new graph. The vertices of $G'$ will be only those vertices of $G$ which are in $L$. However, in addition to the edges in $L$, $G'$ will have some additional edges which are meant to describe the connectivity within the high-degree part of $G$.

We need to do this transformation of $G$ into $G'$ in a manner that preserves the soundness of the instance (a graph that is far from bipartite stays far from bipartite), but does so in a query-efficient manner. Specifically, we want to make sure that every query in the $G'$ can be simulated by a small number of queries in the original graph. To do so, we make a distinction between two types of high-degree components in $G$: Those that have high connectivity to the low-degree part and those that don't. 

Our construction of $G'$ is probabilistic: $G'$ will be a random graph which is guaranteed to meet the specified construction and answer queries in $G$ efficiently with high probability.

If a component $C$ of high-degree vertices has fewer than $\tilde{O}(\abs{C})$ edges going into the low-degree part, then we can safely ignore it because disconnecting all such components in a graph which is far from bipartite keeps the graph far from bipartite.

If a component has more than this number of edges going to the low-degree part, then we want to {\em emulate} this component in $G'$ by an expander graph among the vertices of the low-degree part consistent with the high-degree component. However, we do not know the structure of the $G$ in advance (since we want an algorithm that runs in sublinear time), so we need to implement this expander graph ``on the fly''. 

To be more specific, for each high-degree component $C$ and each edge going out of this high-degree component to a low-degree vertex $v$, we want to introduce an expander edge going out of $v$ in $G'$. We will construct this expander {\em at random}: To emulate an edge going from a low-degree vertex $v$ to a high-degree vertex $w$ in some component $C$ in $G$, we want to choose a random edge between some other $w' \in C$ of the opposite color and some $v' \in L$ and connect $v$ to $v'$. The key point is that since the cut between $C$ and the low-degree vertices is {\em dense}, such an edge can be found by random sampling in a relatively query-efficient way.

To summarize, we replace each high-degree component $C$ of $G$ which has more than about $\abs{C}$ edges going into the low-degree part by a random graph (of small degree) between its low-degree neighbors. Since a random graph is likely to be an expander, this step is likely to preserve distance from bipartiteness. On the other hand, since by assumption the cut between $C$ and $L$ is fairly dense, the step can be implemented in a query-efficient way.

The actual proof includes some additional technicalities. For example, some extra work needs to be done in order to determine if a component $C$ has a sparse cut or a dense cut to $L$. We do so by a preliminary sampling step. 

Also, there could be an imbalance between the number of vertices of the two colors within a component $C$, in which case the sampling of an exit edge may take too long. To resolve this, we will allow to have edges within $G'$ between vertices of the same color, but we will mark those edges as $=$-edges (to indicate that the colors of the endpoints are equal and not different). This gives a generalized version of bipartiteness: A constraint satisfaction problem with two types of constraints which indicate the colors of the endpoints are equal or not equal. We will allow for these generalized constraints and replace them with ordinary (inequality) constraints later on.

\paragraph{Specification of $G'$.} Assume $G_H$ is bipartite. Let $\calC$ be the collection of connected components of $G_H$. We say $G'$ is a {\em sparse version} of $G$ if the following conditions hold:
\begin{enumerate}
\item The set of vertices of $G'$ is $L$.
\item For every edge $(v, w)$ in $G$ with $v, w$ in $L$, $(v, w)$ is an $\neq$-edge of $G'$.
\item Assume $C \in \calC$ satisfies $\abs{E_G(C, L)} \geq \abs{C}m/3n$. Let $Z$ be an arbitrary $\Delta$-regular $1/2$-expander on $\abs{E_G(C, L)}$ vertices. Each vertex of $Z$ represents an edge $(v, w)$, where $v \in L$ and $w \in C$. Then we have the following edges in $G'$: For every pair $(v, w)$, $(v', w')$, where $v, v' \in L$, $w, w' \in C$, if $(v, w)$, $(v', w')$ is an edge in $Z$, then we put an edge between $v$ and $v'$ in $G'$. This is an $=$-edge if $w$ and $w'$ are at even distance in $G_H$, and a $\neq$-edge otherwise.
\item Assume $C \in \calC$ satisfies $\abs{E_G(C, L)} \leq \abs{C}m/6n$. For every edge $(v, w)$ such that $v \in L$, $w \in C$, there is an $=$-loop $(v, v)$ in $G'$.
\item If $\abs{C}m/6n < \abs{E_G(C, L)} < \abs{C}m/3n$, then both are allowed.
\end{enumerate}

Notice that if $G_L$ has max-degree $2d$, then $G'$ has max-degree $4\delta d$.

\begin{claim}
\label{claim:spec}
Assume $G_H$ is bipartite and $G'$ is a sparse version of $G$. Then (1) If $G$ is bipartite, then $G'$ is satisfiable and (2) if $G$ is $m$-removed from bipartite then $G'$ is $m/3$-removed from satisfiable.
\end{claim}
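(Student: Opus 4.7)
The plan is to handle the completeness direction (1) and the soundness direction (2) separately. For (1), given a proper 2-coloring $\chi$ of $G$, I will show that $\chi|_L$ satisfies every constraint of $G'$. The $\neq$-edges from condition~(2) are literally edges of $G_L$, so they are immediate; the $=$-loops $(v,v)$ from~(4) are trivially satisfied. For an expander edge from condition~(3) associated to a $Z$-edge between $(v,w)$ and $(v',w')$, the label $=$ arises when $w$ and $w'$ are at even distance in the bipartite graph $G_H$, i.e., $\chi(w)=\chi(w')$, which combined with $\chi(v)\neq\chi(w)$ and $\chi(v')\neq\chi(w')$ forces $\chi(v)=\chi(v')$; the $\neq$ case is symmetric.

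For (2), I will argue the contrapositive: given $\chi:L\to\{0,1\}$ that leaves at most $k\le m/3$ constraints of $G'$ violated, I exhibit a 2-coloring of $V(G)$ with at most $m$ violated edges. Extend $\chi$ to $H$ by picking, for each connected component $C\in\calC$ of the bipartite graph $G_H$, the 2-coloring $\chi_C$ of $C$ (of the two available) that minimizes the number of edges of $E_G(C,L)$ turned into violations. Because $G_H$ is bipartite, the violated $G$-edges all lie in $L\times L$ or in $E_G(L,H)$. The $L\times L$ violations are in bijection with the violated type-(2) edges of $G'$; call their number $k_2$. For the cross edges, set $f(v,w)=\chi(v)\oplus\chi_C(w)$ on $V(Z)=E_G(C,L)$; then $(v,w)$ is violated in $G$ exactly when $f(v,w)=0$.

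Classify each $C$ as \emph{sparse} ($|E_G(C,L)|\le|C|m/3n$) or \emph{dense} ($|E_G(C,L)|\ge|C|m/3n$), placing components in the overlap range of~(5) into whichever class the construction used. The minimizing choice of $\chi_C$ gives each sparse $C$ at most $|E_G(C,L)|/2$ violations; summing and using $\sum_C|C|\le n$ bounds the sparse contribution by $m/6$. The heart of the argument is the dense case: a short check of both labels in~(3) shows that the expander edge of $G'$ coming from a $Z$-edge between $(v,w)$ and $(v',w')$ is violated iff $f$ disagrees on the two endpoints; hence the violated expander edges from $C$ form the $Z$-cut $E_Z(f^{-1}(0),f^{-1}(1))$. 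The minimization forces $|f^{-1}(0)|\le|V(Z)|/2$, so the $1/2$-expansion of $Z$ yields $|f^{-1}(0)|\le 2|E_Z(f^{-1}(0),f^{-1}(1))|$. Writing $k_{\mathrm{exp}}$ for the number of violated expander edges of $G'$ and noting that $=$-loops are never violated (so $k=k_2+k_{\mathrm{exp}}$), the total $G$-violations come to at most $k_2+2k_{\mathrm{exp}}+m/6\le 2k+m/6\le 5m/6<m$, contradicting that $G$ is $m$-removed from bipartite. The main obstacle is the dense-component identification between $G'$-violations and the $f$-cut of $Z$; once that is pinned down, the $1/2$-expansion does the rest of the bookkeeping.
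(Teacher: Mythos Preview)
Your proof is correct and follows essentially the same route as the paper: prove the contrapositive of (2) by extending a good $L$-assignment to $H$ via the minimizing bipartition of each component, then identify the violated expander constraints of $G'$ with the cut $E_Z(f^{-1}(0),f^{-1}(1))$ (the paper writes this set as $S_C\oplus T_C$) and use $1/2$-expansion to bound the violated cross-edges by $2m_C$. The one small difference is that for the sparse components you exploit the minimizing choice of $\chi_C$ to bound their contribution by $\tfrac12\abs{E_G(C,L)}\le \abs{C}m/6n$, giving a total of $m/6$ there; the paper simply uses $\abs{E_G(C,L)}\le \abs{C}m/3n$ and gets $m/3$, which is why your final count lands at $5m/6$ instead of the paper's $m$.
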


Part (1) follows easily as any consistent bipartition of $G$ is by construction a consistent assignment of $G'$.

For (2) we prove the contrapositive. Assume we have a partition $(S, \Sbar)$ of $L$ that violates at most $m/3$ constraints of $G'$. Write the number of violated constraints as $m_L + \sum_{C \in \calC} m_C$, where $L$ is the number of violated constraints within $L$ and $m_C$ is the number of violated constraints owing to component $C$.

We show $(S, \Sbar)$ can be extended to a partition of $L \cup H$ that violates at most $m$ edges of $G$. We define the coloring on $H$ as follows: Since $G_H$ is bipartite, for every component $C$ of $G_H$ there are two possible partitions. Among the two, choose the one that violates the smaller number of edges between $C$ and $L$.

We now upper bound the number of violated edges in $G$ with respect to this partition. Within $L$, the number of violated edges is exactly $m_L$. Within $H$, by construction there are no violated edges. So it remains to bound the number of violated edges between $L$ and $C$ for every $C \in \calC$. For those $C$ in which the construction in step (4) was applied, we have introduced at most $\abs{E_G(C, L)} \leq \abs{C}m/3n$ violating edges. Even summing over all such components $C$ this introduces at most $m/3$ new violating edges.

It remains to bound the contribution of those $C$ for which the expander construction (3) was applied.  
Let $(T_C, \Tbar_C)$ be the partition of the vertices of $Z$ induced by the coloring on $H$: The color of $e \in Z$ is determined by the color of its endpoint inside $C$. Let $(S_C, \Sbar_C)$ be the partition of the vertices of $Z$ induced by the coloring on $L$: Here the color of $e \in V(Z)$ is determined by the color of its endpoint inside $L$. Notice that a constraint $(e, e')$ of $Z$ is then violated if and only if the $(T_C, \Tbar_C)$ and $(S_C, \Sbar_C)$ of $(e, e')$ are inconsistent, namely
\[ \abs{(S_C(e) \oplus S_C(e')) \oplus (T_C(e) \oplus T_C(e'))} = 1. \]
(Here $A(e) = 1$ if $e \in A$, $0$ if $e \not\in A$.) We can rewrite this equation as
\[ \abs{(S_C(e) \oplus T_C(e)) \oplus (S_C(e') \oplus T_C(e'))} = 1. \]
Therefore, a pair $(e, e')$ violates a constraint of $Z$ if and only if it cuts across the partition $(S_C\oplus T_C, \overline{S_C \oplus T_C})$. Assume without loss of generality that $S_C \oplus T_C$ is the smaller of the two sets. By the expansion of $Z$, the number of violated constraints $m_C$ of $G'$ is at least as large as $\abs{S_C \oplus T_C}/2$.But $\abs{S_C \oplus T_C}$ is exactly the number of edges between $L$ and $H$ whose endpoints are colored inconsistently.

We summarize the total number of violated edges in $G$: We have $m_L$ such edges inside $L$, at most $m/3$ edges coming from type (4) components and at most $2m_C$ edges for every type (3) component $C$. So the total number of violating edges is at most
\[ m_L + m/3 + 2\sum\nolimits_\text{$C$ of type (3)} 2m_C \leq m. \]

\paragraph{Implementation of a $G'$-oracle given oracle access to $G$.} This is a randomized procedure Reduce which is a query-efficient implementation of $G'$ and is a sparse version of $G$ with high probability. This procedure has oracle access to $G$ and direct access to a spanning forest $F$ of $G_H$. For each connected component $C$ of $G_H$, Reduce will have a variable $s_C$ that has value ``sparse'' (indicating that the $\abs{E_G(C, L)}$ is small) or ``dense'' (indicating it is large). Initially, all these variables are uninitialized.

Without loss of generality we will assume $\abs{E_G(C, L)}$ is an even number. This can be enforced, say, by making two copies of every edge in $G$.

The idea of Reduce is to build (in an online manner) an expander $Z$ on the dense components, while replacing the sparse components with loops. 
The expander $Z$ will consist of $\Delta$ random perfect matchings. These matchings are constructed in an online manner. For every component $C$ of $G_H$ that has been declared dense, the implementation of $G'$ keeps track of the partial perfect matchings corresponding to this component. (Initially, when $s_C$ is uninitialized the collection of matchings is empty.

It will be more convenient to specify $G'$ in the sparse graph model (via adjacency lists). Remember $G$ is provided to us in the dense graph model. A query of the type $(v, i)$ should be answered by $(w, =)$ (if $v$ and $w$ are connected by an $=$-edge) $(w, \neq)$ (if $v$ and $w$ are connected by a $\neq$-edge) or $\bot$ (if $v$ has fewer than $i$ neighbors).

Here is how a query $(v, i)$ for the $i$th neighbor of $v \in L$ is answered in $G'$:

\begin{enumerate}
\item Query $(v, w)$ in $G$ for all $w$ to find all neighbors of $v$ in $G$. There are at most $4d$ of them (remember we doubled the number of edges). Make two lists: One that contains those neighbors in $L$, and the one that contains those neighbors in $H$. Let $k$ be the number of neighbors in $H$.
\item If $i > k\Delta$, return $(w, \neq)$, where $w$ is the $(i - k\Delta)$-th neighbor of $v$ on the $L$-list (or $\bot$ if it doesn't exist).
\item If $i \leq k\Delta$, let $w$ be the $\floor{i/\Delta}$-th neighbor of $v$ in $H$ and $j = i \mod \Delta$.
\item If $w$ belongs to a component $C$ such that $s_C$ is uninitialized, repeat the following experiment for $100n\abs{L}\log n/m$ steps: Choose random vertices $u \in C$ and $z \in L$ and query if $(u, z)$ is an edge in $G$. If the number of detected edges is less than $100\log n/4$ set $s_C = {\rm ``sparse"}$ (and go to step 5), otherwise set $s_C = {\rm ``dense"}$ (and go to step 6).
\item If $w$ belongs to a component $C$ of $H$ such that $s_C = {\rm ``sparse"}$, return $(v, =)$ (Self-loops.)
\item If $w$ belongs to a component  $C$ of $H$ such that $s_C = {\rm ``dense"}$: If this edge was already visited, return the previous answer. Otherwise, repeat the following procedure until an output is produced: Choose random vertices $u \in C$ and $z \in L$. If $(u, z)$ is an edge in $G$: If the $j$th matching edge corresponding to $(u, z)$ has not been visited yet, mark that the $j$th matching connects $(v, w)$ and $(u, z)$ (i.e., mark that $(v, i)$ goes to $u$ and mark that the corresponding edge out of $u$ with index $\text{(order of $w$)}\Delta + j$ goes to $v$). Output $(u, =)$ if $w$ and $u$ are connected by a path of even length in $F$, and $(u, \neq)$ if they are connected by a path of odd length. 
\end{enumerate}

The {\em amortized query complexity} of the $q$th query into $G'$ is the total number of queries $G'$ makes into $G$ while answering its first $q$ queries divided by $q$.

\begin{claim}
\label{claim:compl}
(Amortized query complexity analysis) Assume $m = \tilde{\Omega}(n)$. Fix any sequence of queries to $G'$. With probability $9/10$, the amortized query complexity of the last query into $G'$ is $\tilde{O}(n)$.
\end{claim}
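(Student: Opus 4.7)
My plan is to decompose the total $G$-query cost incurred by \textbf{Reduce} while answering a fixed sequence of $q$ queries to $G'$ into the contributions from step~1, step~4, and step~6, and to bound each by $\tilde{O}(qn)$ with high probability. Step~1 deterministically costs $O(n)$ queries per $G'$-query (it just asks every $(v,w)$ in $G$). Step~4 is invoked at most once per connected component of $G_H$ and costs $100n\abs{L}\log n/m=\tilde{O}(n)$ queries per invocation (using $\abs{L}\le n$ and $m=\tilde{\Omega}(n)$); since at most $q$ new components are encountered, step~4 contributes at most $\tilde{O}(qn)$ overall. Both bounds are deterministic.

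The substantive work is analysing step~6. First I would set up a \emph{good event}: by a Chernoff bound on the $100n\abs{L}\log n/m$ samples drawn in step~4, and a union bound over the at most $n$ components of $G_H$, with probability at least $19/20$ every component $C\in\calC$ that \textbf{Reduce} declares ``dense'' actually satisfies $\abs{E_G(C,L)}\ge\abs{C}m/12n$. Condition on this event. For each pair $(C,j)$ (component and matching index), let $M_{C,j}$ count the non-cached step-6 invocations entering $(C,j)$; since each such invocation corresponds to a distinct previously-unvisited pair $(v,w)$ and each $G'$-query triggers at most one non-cached invocation, the $M_{C,j}$ sum to at most $q$. When the $k$-th non-cached invocation into $(C,j)$ runs its rejection loop, exactly $k-1$ slots of the $j$-th matching are already filled, so a random $(u,z)\in C\times L$ is a usable edge with probability $(\abs{E_G(C,L)}-2(k-1))/(\abs{C}\cdot\abs{L})$ and the expected per-invocation cost is the reciprocal. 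Using the good-event bound $\abs{C}\cdot\abs{L}/\abs{E_G(C,L)}=\tilde{O}(n)$ together with a harmonic-sum estimate, I would show that the expected total step-6 cost attributable to $(C,j)$ is $\tilde{O}(M_{C,j}\cdot n)$. Summing over $(C,j)$ and using $\sum_{C,j}M_{C,j}\le q$ gives expected total step-6 cost $\tilde{O}(qn)$, which Markov's inequality promotes to a $19/20$-probability bound; a union bound with the good event then delivers a total of $\tilde{O}(qn)$ $G$-queries with probability at least $9/10$.

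The main obstacle is controlling step~6 once a matching begins to saturate: as the $j$-th matching on $Z$ fills up, the per-invocation expected cost diverges because $\abs{E_G(C,L)}-2(k-1)\to 0$. The rescue is amortized --- saturation at $(C,j)$ can only occur after $M_{C,j}$ has grown to a constant fraction of $\abs{E_G(C,L)}$, and the entire harmonic tail is bounded by $\abs{C}\cdot\abs{L}\cdot O(\log\abs{E_G(C,L)})$, which absorbs into $\tilde{O}(M_{C,j}\cdot n)$ via the good-event bound $\abs{C}\cdot\abs{L}/\abs{E_G(C,L)}=\tilde{O}(n)$. This is the single place where the hypothesis $m=\tilde{\Omega}(n)$ does nontrivial work beyond guaranteeing an initially large rejection-sampling success probability; once this amortization is in place, linearity of expectation plus Markov closes the argument.
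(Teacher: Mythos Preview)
Your proposal is correct and follows essentially the same route as the paper: condition on the Chernoff-based ``good event'' that every component declared dense has $\abs{E_G(C,L)}\ge\Omega(\abs{C}m/n)$, split the cost into the cheap deterministic parts and the step-6 rejection sampling grouped by $(C,j)$, bound the latter by a harmonic sum of geometric means $\abs{C}\abs{L}/(e-2i)$, use density plus $m=\tilde\Omega(n)$ to get $\tilde O(n)$ amortized per query in expectation, and finish with Markov. Your treatment of the saturation issue (that the full harmonic tail $\abs{C}\abs{L}\cdot O(\log e)$ only arises once $M_{C,j}=\Theta(e)$, so it absorbs into $\tilde O(M_{C,j}\cdot n)$) is in fact a bit more careful than the paper's own write-up of the final step.
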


\begin{claim}
\label{claim:corr}
(Correctness of implementation) Assume $G_H$ is bipartite and $F$ is a spanning forest of $H$. Further assume $\Delta \geq 3\log n$. With probability $9/10$, $G'$ is a sparse version of $G$.
\end{claim}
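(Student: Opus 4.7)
The plan is to verify each of the five clauses of the sparse-version specification in turn, absorbing the randomness of Reduce into a union bound. Clauses (1) and (2) hold deterministically by the query-answering logic in steps 1--2, and clause (5) provides slack for borderline components. What remains, for each component $C \in \calC$, is to show that (a) the classification variable $s_C$ lands in a permissible category, and (b) whenever $s_C = $ ``dense'' the online matching construction yields a $\Delta$-regular $1/2$-expander on $V(Z) = E_G(C, L)$. For (a), each trial in step 4 detects an edge with probability $\abs{E_G(C, L)}/(\abs{C}\abs{L})$, so over $100 n \abs{L} \log n / m$ trials the expected detection count is at least $100\log n/3$ when $C$ is dense and at most $100 \log n/6$ when $C$ is sparse. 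A Chernoff bound around the threshold $100\log n/4$ yields the correct verdict with probability $1 - n^{-\Omega(1)}$ per component, and a union bound over $\abs{\calC} \leq n$ components costs at most $1/30$ of the failure budget.

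For (b), I would first argue that the online matching procedure in step 6 produces a joint distribution on the $\Delta$ matchings that is identical to $\Delta$ independent uniformly random perfect matchings on $V(Z) = E_G(C, L)$ (of even size, by the edge-doubling convention). Each trial independently picks a uniform pair $(u, z) \in C \times L$, so conditioned on it being an unclaimed edge, its distribution is uniform over such edges; iterating gives the classical sequential description of a uniformly random perfect matching. Crucially, the joint law of all $\Delta$ full matchings does not depend on which vertices of $V(Z)$ are ever queried, so we may analyze $Z$ as a completed random graph even though only a small fraction of its edges is ever revealed.

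The remaining ingredient is the standard fact that the union of $\Delta \geq 3 \log n$ independent uniformly random perfect matchings on $N \leq n$ vertices is a $1/2$-expander with probability at least $1 - 1/\poly(n)$: for any fixed $S$ with $\abs{S} \leq N/2$, each vertex of $S$ is matched outside $S$ with probability at least $1/2$ in each matching, so the expected cut size is at least $\Delta \abs{S}/2$; negative association of matching indicators within each matching, combined with independence across the $\Delta$ matchings, gives concentration of the form $\exp(-c \Delta \abs{S})$, and a union bound over sets of each size $k$, controlled by $\binom{N}{k} \leq (eN/k)^k$, is absorbed once $\Delta \gtrsim \log n$. A final union bound over the at most $n$ dense components completes clause (3); clause (4) follows immediately from the self-loop rule; and the $=$/$\neq$ labels are correct because $G_H$ is bipartite, so any two paths between the same endpoints in the same component of $G_H$ have the same parity, and hence distances in $F$ agree with distances in $G_H$ modulo $2$. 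The main technical obstacle is making the expansion bound work under the precise constant $\Delta \geq 3\log n$ --- this likely requires either a modestly larger constant hidden in the claim or a sharper concentration estimate for matching statistics than the off-the-shelf Chernoff bound provides.
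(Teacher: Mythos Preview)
Your proposal is correct and follows essentially the same strategy as the paper: invoke the Chernoff-based classification argument (which the paper isolates as Claim~\ref{claim:auximpl}) and then verify that for each dense component the union of $\Delta$ random perfect matchings is a $1/2$-expander with failure probability $O(1/n)$, finishing with a union bound over the at most $n$ components. The one minor technical difference is in the expander concentration step: the paper splits into the regimes $\abs{S} \le z/4$ (handled by a direct counting argument on the first $\Delta/2$ matchings) and $z/4 < \abs{S} \le z/2$ (handled by Azuma's inequality on the edge-exposure martingale), whereas you propose a unified argument via negative association of matching indicators --- both are standard, and the paper's explicit case split is precisely what makes the constant $\Delta \ge 3\log n$ go through, which is the obstacle you correctly flag at the end.
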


We explain the somewhat unusual assumption on $\Delta$ below. To prove both of these claims, we start with the simple claim that with probability $19/20$ procedure Reduce correctly identifies the sparse and dense cuts between $L$ and $H$:

\begin{claim}
\label{claim:auximpl}
With probability at least $19/20$, the following is true for all components $C$: If $\abs{E_G(C, L)} \geq \abs{C}m/3n$ then $s_C = {\rm ``dense"}$ and if $\abs{E_G(C, L)} \leq \abs{C}m/6n$ then $s_C = {\rm ``sparse"}$.
\end{claim}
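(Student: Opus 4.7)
The argument is a direct Chernoff-plus-union-bound calculation on the sampling loop inside step~4 of Reduce. Fix any component $C \in \calC$, and consider the $N := 100n\abs{L}\log n/m$ independent trials performed by Reduce when it first encounters $C$. Each trial draws a uniformly random pair $(u, z) \in C \times L$ and queries whether $(u, z)$ is an edge of $G$; it succeeds with probability $p_C := \abs{E_G(C, L)}/(\abs{C}\abs{L})$. Hence the count $X_C$ of detected edges is binomial with mean
\[ \mu_C := N p_C = \frac{100 n \log n \cdot \abs{E_G(C, L)}}{m\abs{C}}. \]

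First, I would compute $\mu_C$ in the two regimes singled out by the claim. If $\abs{E_G(C, L)} \geq \abs{C}m/3n$, then $\mu_C \geq 100\log n/3 > 33\log n$, comfortably above the threshold $25\log n$ used to classify $C$ as dense. If instead $\abs{E_G(C, L)} \leq \abs{C}m/6n$, then $\mu_C \leq 100\log n/6 < 17\log n$, comfortably below the threshold. The multiplicative Chernoff bound, applied with relative deviation $\delta = 1/4$ in the dense regime and $\delta = 1/2$ in the sparse regime, yields $\pr[X_C < 25\log n] \leq n^{-c}$ and $\pr[X_C \geq 25\log n] \leq n^{-c'}$ for absolute constants $c, c' > 1$. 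In particular both probabilities are at most $1/(40n)$ once $n$ exceeds a fixed constant; the constants $100$ in $N$ and the $1/4$ factor in the threshold are tuned precisely to leave this slack.

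Finally, since $\abs{\calC} \leq n$, a union bound over all components bounds the total probability of misclassification by $2n \cdot 1/(40n) = 1/20$, proving the claim. The only real point of care is in the Chernoff bookkeeping: one must verify that $c, c' > 1$ so that the $n$-fold union bound goes through. This is the main (and really only) obstacle, but it is a routine quantitative check with the constants already fixed by the algorithm. A minor remark: for components never touched by Reduce, the variable $s_C$ is uninitialized and the claim is vacuous; one can either restrict the statement to sampled components or pretend Reduce pre-samples every component, since the sampling for distinct components uses independent randomness.
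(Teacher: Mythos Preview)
Your proposal is correct and follows essentially the same approach as the paper: a direct Chernoff bound on the binomial count $X_C$ of detected edges (with deviations $\delta=1/4$ and $\delta=1/2$ in the dense and sparse cases, respectively), followed by a union bound over the at most $n$ components. Your additional remark about uninitialized $s_C$ is a fair observation that the paper leaves implicit.
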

\begin{proof}
Let $s = 100n\abs{L}\log n/m$. Suppose $\abs{E_G(C, L)} \geq \abs{C}m/3n$. Then each experiment in round (4) independently hits an edge of $G$ with probability $p = \abs{E_G(C, L)}/\abs{C}\abs{L}$. The expected number of hits is at $ps \geq 100\log n/3$. By a Chernoff bound, the probability of getting fewer than $3/4$ the expected number of hits is less than $1/20n$. But if we get at least $3/4$ the desired number of hits, $s_C$ is correctly set to ``dense''. Similarly if $\abs{E_G(C, L)} \leq \abs{C}m/3n$, we get an expected number of no more than $ps \leq 10\log n/6$ hits, and the probability of getting more than $3/2$ the expected number of hits is less than $1/20n$. But if we get at most $3/2$ the expected number of hits then $s_C$ is correctly set to ``sparse''. Taking a union bound over all $C$ (there are at most $n$ of them) we get the desired result.
\end{proof}

To prove Claim~\ref{claim:compl}, we need to argue that every query in $G'$ can be emulated by $\tilde{O}(n)$ queries in $G$. We reason like this. On query $(v, i)$, we take time $O(n)$ to list all the neighbors of $v$ in $G$. If the $i$th neighbor happens to be in $L$ (step 2 in algorithm Reduce), 
we have answered the query in $O(n)$ time. Otherwise, if $s_C$ is uninitialized, we take another $\tilde{O}(n)$ time to initialize it. So if $C$ is sparse, we answer the query in $\tilde{O}(n)$ time. We now come to the more interesting case where $C$ is dense.

Consider what happens when we query an edge in the $j$th matching in some dense component $C$. If this edge was queried previously, we are done in time $\tilde{O}(n)$. But otherwise, we have to find a random edge between $C$ and $L$ previously unvisited in the $j$th matching. The more edges we see in the $j$th matching, the harder the new ones become to find. This is essentially the coupon collecting problem, so we expect to visit the edges in the $j$th matching in amortized time $O(\log n)$. Since it takes extra time $\tilde{O}(n)$ to just locate a single random edge between $C$ and $L$, we get the desired amortized query complexity.

\begin{proof}[Proof of Claim~\ref{claim:compl}]
By Claim~\ref{claim:auximpl} with probability $19/20$ if $s_C = {\rm ``dense"}$, it must be that $\abs{E(C, L)} \geq \abs{C}m/6n$. We assume that this is the case. Let $T_i$ denote the time taken to answer the $i$th query, $1 \leq i \leq q$. The quantities $T_i$ are random variables and we will bound the expectation of their sum.

Without loss of generality we will assume that no queries are repeated. Also, if the $j$th neighbor of $v$ was already determined to be the $j'$th neighbor of $u$, we will assume the query $(u, j')$ is not repeated. In all these cases the answer can clearly be determined in $O(n)$ time (and even much less).

We now group the queries into a set $R$ and sets $S_{C,j}$, where $R$ are queries within $L$
or between $L$ and sparse components of $C$ and $S_{C, j}$ are queries that involve the $j$th matching inside dense component $C$. By linearity of expectation,
\[ \E[T_1 + \dots + T_q] = \sum_{k \in R} \E[T_k] + \sum_{C,j} \sum_{k \in S_{C,j}} \E[T_k]. \]
As we argued above, each entry in the first some is bounded by $Kn^2\log n/m$ for some constant $K$. We now consider the entries in the second sum. Fix $C$ and $j$ and let $k_1 \leq \dots \leq k_t$ be the order of the queries in $S_{C, j}$. Let $e = \abs{E(C, L)}$. Since the queries do not repeat, $t \leq e/2$. In order to answer query $k_{i+1}$ (once $k_1,\dots,k_i$ have been answered), we need to locate an edge between $C$ and $L$ which is not already matched in the $j$th matching. The number of already matched edges in the $j$th matching is $2i$. This leaves $e - 2i$ good edges between $C$ and $L$ that can be used for the $j$th matching out of $v$. The probability that such an edge is hit by a random pair $(u \in C, z \in L)$ is $(e-2i)/\abs{C}\abs{L}$. Therefore, the expected number of attempts before such a pair is hit is $\abs{C}\abs{L}/(e - 2i)$.

It follows that the expected time $\E[T_{k_i}]$ to perform the $i$th query is at most $K\abs{C}\abs{L}/(e - 2i)$, where $K$ is some fixed constant. Adding all of these together we get that
\[ \sum_{k \in S_{C,j}} \E[T_k] \leq K \sum_{i=1}^t \frac{\abs{C}\abs{L}}{e - 2i}
   \leq K  \sum_{i=1}^{t} \frac{\abs{C}\abs{L}}{e - i} =  K\abs{C}\abs{L}  \sum_{i=1}^{t} \frac{1}{e-t + i}
   \leq 2K\abs{C}\abs{L} \log e, \]
using the standard estimate of the harmonic series. Since $C$ is dense, we have that
\[ \sum_{k \in S_{C,j}} \E[T_k] \leq 2K \abs{L} \cdot \frac{6ne}{m} \cdot \log e = O((n^2 \log n /m) \cdot t) \]
because $t \leq e/2$. Using the assumption that $m = \tilde{\Omega}(n)$ and Markov's inequality, we obtain the desired amortized query complexity.
\end{proof}

To finish the proof of Claim~\ref{claim:corr} we additionally need to show that for each component $C$ that was identified as dense, the simulated queries in step (6) are consistent with an expander graph on the set $Z$. To prove this, notice that algorithm Reduce implements $Z$ by a collection of $\Delta$ random matchings. It is a well-known fact that such a collection of random matchings is an expander graph with high probability. In order to be able to take a union bound over all components $C$, we need to make the probability of each cut large enough in $n$, and this is why we need to assume that $\Delta \geq 3\log n$ (as supposed to constant as is standard in such calculations).

\begin{proof}[Proof sketch of Claim~\ref{claim:corr}]
By Claim~\ref{claim:auximpl} the classification of components into sparse and dense is correct with high probability, so it remains to argue that with high probability for all dense components $C$, all graphs constructed in step (6) are expanding. Let's fix such a dense component $C$ and let $Z$ be the corresponding graph. We will prove that the probability $Z$ is not $1/2$ expanding is at most $1/20n$. The conclusion then follows by a union bound over all $C$.

Let $z = \abs{Z}$ and fix a partition $(S, \Sbar)$ of $Z$ with $s = \abs{S}$ and $s \leq z/2$. We first consider the case $s \leq z/4$. Start choosing the edges in the first $\Delta/2$ matchings  going out of $S$ one by one. At each point, the probability that the other endpoint of the edge lands inside $S$ is at most $2s/z$. Taking a union bound over all $S$ of size $s$, we get that the probability that fewer than $s/2$ edges land outside $S$ is at most 
\[ \binom{z}{s} \cdot \binom{\Delta s/2}{s/2} \cdot \Bigl(\frac{2s}{z}\Bigr)^{\Delta s/2 - s/2} 
   \leq \Bigl(2e^3\Delta(2s/z)^{\Delta-3}\Bigr)^{s/2} \leq (16e^3\Delta/2^\Delta)^{s/2}, \]
using the inequality $\binom{n}{k} \leq (en/k)^k$ and the assumption $s \leq z/4$. Now we use the fact that $\Delta \geq 3\log n$ and conclude that the probability that {\em some} cut of size at most $z/4$ is violated is no more than
\[ (16e^3\Delta/2^\Delta)^{1/2} + \sum_{s=2}^{z/4} (16e^3\Delta/2^\Delta)^{s/2} \leq 1/40n \]
for sufficiently large $n$. Now suppose $z/4 \leq s < z/2$. In this case, the expected number of edges in the cut $(S, \Sbar)$ is $\Delta s(z - s) \geq \Delta s/2$. The events that an edge is in the cut are not independent, but they form a martingale, so we can apply Azuma's inequality (see for instance~\cite{BOT} for a similar application of Azuma's inequality) to conclude that fewer than half of these edges make it into the cut with probability at most $e^{-\Delta s/8} \leq e^{-\Delta z/32} \leq 2^{-z}/40n$, using the assumption $\Delta \geq 3\log n$. Taking a union bound over all such cuts we get an extra contribution of $1/40n$, so the probability $Z$ is not expanding is at most $1/20n$. This concludes the proof.
\end{proof}

\section{From an XOR game to a bipartite graph}
\label{sec:bipartite}

We give a bounded degree reduction from XOR games to bipartiteness. Assume $G$ is an XOR game with $n$ vertices of maximum degree $D$. Consider the following graph $G'$:

\begin{enumerate}
\item $G'$ has $2n$ vertices. Each vertex $v$ of $G$ is represented by a pair $v, v'$.
\item For every $\neq$-edge $(v, w) $in $G$, we put the edges $(v, w)$ and $(v', w')$ in $G'$.
\item For every $=$-edge $(v, w)$ in $G$, we put the edges $(v, w')$ and $(v', w)$ in $G'$.
\item For every vertex $v$ in $G$, we put in $D$ parallel edges $(v, v')$ in $G'$.
\end{enumerate}

Notice that $G'$ is a graph on $2n$ vertices with maximum degree $2D$. Moreover, given oracle access to $G$, a query in $G'$ can be implemented in constant time (in the bounded degree model).

\begin{claim}
\label{claim:xor}
Assume $G$ is $m$-removed from satisfiable. Then $G'$ is $2m$-removed from bipartite.
\end{claim}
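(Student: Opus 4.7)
I would prove the contrapositive: assume $G'$ admits a bipartition $\chi \colon V(G') \to \{0,1\}$ violating at most $2m$ edges, and manufacture from $\chi$ an assignment $\sigma \colon V(G) \to \{0,1\}$ violating at most $m$ constraints of $G$, contradicting the hypothesis that $G$ is $m$-removed from satisfiable.

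The central step is a cleanup phase whose purpose is to produce a modified bipartition $\chi^*$ that satisfies every parallel edge---that is, $\chi^*(v) \neq \chi^*(v')$ for every $v \in V(G)$---while not increasing the total number of violated edges. Call a vertex $v$ \emph{bad} if $\chi(v) = \chi(v')$; each such $v$ contributes exactly $D$ violated parallel edges. For each bad $v$ I would flip $\chi(v')$. This eliminates all $D$ parallel violations at $v$, and the only other edges whose status can change are the non-parallel edges incident to $v'$. By the construction of $G'$, there is exactly one such edge per edge of $G$ incident to $v$, so there are at most $\deg_G(v) \leq D$ of them, and the net change in the number of violated edges is at most $-D + D = 0$. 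Distinct bad vertices involve disjoint pairs $\{v,v'\}$, so the flips can be carried out sequentially: each flip only toggles whether the current $v$ is bad (never the badness of any other vertex), and every flip weakly decreases the total number of violations. After processing all originally-bad vertices I obtain a bipartition $\chi^*$ with at most $2m$ violated edges, all of them non-parallel, and with $\chi^*(v) \neq \chi^*(v')$ for every $v$.

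I would then set $\sigma(v) = \chi^*(v)$. Using $\chi^*(v) \neq \chi^*(v')$ and $\chi^*(w) \neq \chi^*(w')$, a direct case check shows: for a $\neq$-edge $(v,w)$ of $G$, both $(v,w)$ and $(v',w')$ in $G'$ are violated iff $\chi^*(v) = \chi^*(w)$, i.e.\ iff $\sigma(v) = \sigma(w)$, which is exactly the $\neq$-constraint being violated; for an $=$-edge $(v,w)$, both $(v,w')$ and $(v',w)$ are violated iff $\chi^*(v) \neq \chi^*(w)$, i.e.\ iff the $=$-constraint is violated. Hence the number of constraints of $G$ violated by $\sigma$ equals exactly half the number of non-parallel edges of $G'$ violated by $\chi^*$, which is at most $m$---the desired contradiction.

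The one delicate point, and the reason the construction employs $D$ parallel edges rather than, say, a single one, is the cleanup step: the bound of $D$ on the non-parallel degree of $v'$ must exactly match the $D$ parallel edges being repaired, so the trade-off is tight. Once that step is in place, the rest is routine bookkeeping.
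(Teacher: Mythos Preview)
Your proof is correct, but it takes a genuinely different route from the paper's. The paper does not perform any cleanup of the bipartition. Instead, it reads off \emph{two} candidate assignments for $G$ from the given bipartition of $G'$: one assignment $S$ determined by the colors of the unprimed copies $v$, and another assignment $T$ determined by the colors of the primed copies $v'$. It then shows, via a counting inequality, that the total number of $G$-constraints violated by $S$ and by $T$ together is at most the number of $G'$-edges violated by the original bipartition (the $D|S\oplus T|$ term from the parallel edges absorbs the discrepancy between $S$ and $T$), so the better of the two violates at most $m$ constraints. Your approach instead spends the parallel-edge budget explicitly: you locally flip each bad $v'$ so that the two copies agree, and the $D$ parallel edges exactly pay for the at most $D$ non-parallel edges at $v'$ that may flip the wrong way. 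After this synchronization a single assignment suffices and the factor of two is immediate. Both arguments hinge on the same quantitative fact---that the $D$ parallel edges match the non-parallel degree of $v'$---but yours makes the role of that choice of multiplicity more transparent, while the paper's averaging argument avoids the need to reason about the order of flips.
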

\begin{proof}
We show the contrapositive. Assume $G'$ is $2m$-close to bipartite. Consider the optimum partition of $G'$. We think of it as a disjoint union of two partitions $(S \cup \overline{T}, \overline{S} \cup T)$, where $(S, \overline{S})$ is a partition of the $v$-vertices, and $(\overline{T}, T)$ is a partition of the $v'$ vertices. We will show that the better one of the partition $(S, \overline{S})$ and $(T, \overline{T})$ violates no more than $m$ edges in $G$. 

Let $e_=(U)$ be the total number of $=$-edges of $G$ with both endpoints in $U$. Let $e_=(U, V)$ be the total number of pairs $(u, v)$ where $u \in U$, $v \in V$, and $(u, v)$ is n $=$-edge of $G$. (We allow for $U$ and $V$ to intersect.) We define $e_{\neq}(U)$ and $e_{\neq}(U, V)$ analogously.

The number of constraints in $G$ violated by both $(S, \overline{S})$ and $(T, \overline{T})$ in is then
\[ \bigl(e_{\neq}(S) + e_{\neq}(\overline{S}) + e_=(S, \overline{S})\bigr)  + \bigl(e_{\neq}(T) + e_{\neq}(\overline{T}) + e_=(T, \overline{T})\bigr). \]
On the other hand, the number of edges of $G'$ violated by the partition $(S \cup \overline{T}, \overline{S} \cup T)$ is
\[ \bigl(e_{\neq}(S) + e_{\neq}(\overline{S}) + e_{\neq}(T) + e_{\neq}(\overline{T})\bigr)
  + \bigl(e_=(S, \overline{T}) + e_=(\overline{S}, T)\bigr) + D\abs{S \oplus T}, \]
where $S \oplus T$ is the symmetric difference of $S$ and $T$. We now argue that
\[ e_=(S, \overline{S}) +  e_=(T, \overline{T}) \leq e_=(S, \overline{T}) +  e_=(\overline{S}, T) + D\abs{S \oplus T}. \]
We have
\[ e_=(S, \overline{S}) = e_=(S, \overline{S} \cap \overline{T}) + e_=(S, \overline{S} \cap T)
   \leq e_=(S, \overline{T}) + e_=(S, \overline{S} \cap T) \]
and by symmetry
\[ e_=(T, \overline{T}) \leq e_=(T, \overline{S}) + e_=(T, \overline{T} \cap S). \]
Since $\overline{S} \cap T$ and $\overline{T} \cap S$ are a disjoint partition of $S \oplus T$, we get
\[ e_=(S, \overline{S} \cap T) + e_=(T, \overline{T} \cap S) \leq D\abs{S \oplus T} \]
which proves the desired inequality. It follows that together the partitions $(S, \overline{S})$ and $(T, \overline{T})$ violate at most $2m$ edges of $G$, so one of them will violate at most $m$ edges of $G$.
\end{proof}

\section{Proof of the main theorem}
\label{sec:main}

We now describe how the above steps combine into an algorithm for testing bipartiteness. Let $K$ be the constant from the Goldreich-Ron algorithm. On input a graph:
\begin{enumerate}
\item Choose a random subset of $\tilde{O}(1/\eps^2)$ vertices of and denote the induced subgraph on those vertices by $G$. Let $n$ be the number of vertices of $G$ and set $m = n/(log n)^c$ where $c$ is a sufficiently large constant. Set $d = n^{3/2(K+1)}$ and $\Delta = 3\log n$.

\item Run algorithm Degree on $G$.

\item Run algorithm High-degree on $G$. If the algorithm produces an odd-length cycle, reject. Otherwise, let $F$ be any maximal subforest of the queries. 

\item Let $G_1$ be the (oracle) XOR game obtained from $G$ and the forest $F$ using the implementation in Section~\ref{sec:eliminate}.

\item Let $G_2$ be the (oracle) graph obtained from $G_1$ using the XOR-to-bipartiteness reduction in Section~\ref{sec:bipartite}.

\item Run the Goldreich-Ron bipartiteness algorithm on $G_2$ and return its answer.
\end{enumerate}

\paragraph{Completeness analysis} The completeness of this algorithm follows by construction: Any bipartition of the original graph carries over to the graphs $G_1$, $G_2$ and $G_3$, so if the original graph is bipartite the algorithm will always say so.

\paragraph{Soundness analysis} We now argue soundness. We assume the original graph is $\eps$-far from bipartite. By Conjecture~\ref{conj:ak}, with probability $9/10$, $G$ is $m$-removed from bipartite. If this is the case, then by Claim~\ref{claim:highdeg}, with probability $9/10$, unless the algorithm has found an odd cycle in step 2, there exists a subgraph $G'$ of $G$ obtained by removing at most $m/2$ edges such that $F$ is a spanning forest of $H$ in $G'$ and $G'$ restricted to $H$ is bipartite. 

The set $F$ could contain edges that are not in $G'$; add these to $G'$ to obtain a graph $G''$. Then $F$ remains a spanning forest of $G''$ and $G''$ restricted to $H$ is still bipartite (any bicoloring induced by $F$ is good for both $G_H'$ and $G_H''$). Moreover, $G_H''$ is still $m/2$-removed from bipartite (since we only added edges to it).

By Claim~\ref{claim:corr}, with probability $9/10$, $G_1$ is then a sparse version of $G''$. By Claim~\ref{claim:spec}, if this is the case, then $G_1$ is $m/6$-removed from bipartite. Finally, by Claim~\ref{claim:xor}, $G_2$ is $m/3$-removed from bipartite. Then the Goldreich-Ron algorithm will detect an odd cycle with probability $9/10$. Taking a union bound over all failure events we get that an odd-length cycle is found with probability at least $7/10$.

\paragraph{Running time analysis} The running time analysis is straightforward. Steps 1, 2, and 3 all run in time $\tilde{O}(n^2/d)$. For step 4, by Claim~\ref{claim:compl} with probability $9/10$ the running time of the remaining part is $\tilde{O}(n) \cdot q$, where $q$ is the number of queries performed by the Goldreich-Ron algorithm. Notice that the maximum degree of vertices in $G_2$ is $2\Delta d \leq 6d \log n$, so this algorithm will take time $\tilde{O}(n^c)$, where $c = 2 - 3/2(k+1)$. This is a constant strictly less than two. This running time is achieved only with probability $9/10$. By stopping the algorithm after running for this many steps and declaring the graph is bipartite, we get an algorithm that always runs in time $\tilde{O}(n^c)$ accepts all bipartite graphs, and rejects non-bipartite ones with probability at least $3/5$.

\section{Proof of the conjecture for regular graphs}
\label{sec:ak}

We prove that Conjecture~\ref{conj:ak} is true when the graph is $d$-regular for any $d$. The proof can easily be extended to almost-regular graphs, where the degree of every vertex is between $d$ and $Kd$ for an arbitrary constant $K$, but we won't do so for simplicity.

\begin{theorem}
\label{thm:ak}
Assume $G$ is $\eps$-far from bipartite and $d$-regular. Let $S$ be a random subset of $O((\log(1/\eps))^2/\eps)$ vertices of $G$. With probability $9/10$, $S$ is $\tilde{\Omega}(\eps)$-far from bipartite.
\end{theorem}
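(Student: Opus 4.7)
The plan is to union bound over the $2^s$ partitions $(A,B)$ of $S$ and show that each one leaves at least $\tilde\Omega(\eps)\,s^2$ violating edges in $G_S$, with probability $9/10$ over the sampling of $S$. Fix a partition $(A,B)$ of $S$. I would extend $(A,B)$ greedily to a partition $(X,Y)$ of the whole vertex set $V$: process the vertices $v\in V\setminus S$ one at a time in arbitrary order, placing each $v$ on the side that minimizes the number of violating edges from $v$ to the already-colored part of the graph. By the $\eps$-far hypothesis, $(X,Y)$ has at least $\eps n^2/2$ violating edges in $G$.

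A straightforward averaging argument shows that each $v\in V\setminus S$ contributes at most half of its already-colored degree, so the number of violating $(X,Y)$-edges lying outside $G_S$ is at most $\tfrac12\bigl(\abs{E_G(S,V\setminus S)}+\abs{E_G(V\setminus S,V\setminus S)}\bigr)\le dn/4$. This yields the deterministic lower bound
\[ V_S(A,B) \ge \eps n^2/2 - dn/4, \]
which is unfortunately vacuous for $d$-regular graphs that are $\eps$-far from bipartite: since the max-cut of any graph is at least half its edges, such a graph must satisfy $d \ge 2\eps n$. In expectation over $S$, however, the count of $(X,Y)$-violating edges that fall inside $G_S$ is at least $\phi(X,Y)\cdot s(s-1)/(n(n-1)) = \Omega(\eps s^2)$, so the problem reduces to establishing a strong enough concentration of $V_S$ around its expectation.

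The main obstacle is proving such concentration: because $V_S$ depends quadratically on the sampled vertices and edges of $G_S$ are not independent, the naive vertex-exposure martingale only gives a deviation bound of order $d\sqrt{s\log(1/\delta)}$, which is far too weak to union-bound over $2^s$ partitions of $S$ whenever $d \gg \eps n$. My plan is to (i) restrict the union bound to the at most $2^s$ partitions of $V$ that arise as greedy extensions of partitions of $S$ (rather than over arbitrary partitions of $V$), and (ii) use a $U$-statistic or higher-moment tail inequality that exploits the quadratic dependence of $V_S$ on $S$ to sharpen the tail to $\exp(-\Omega(s))$ at the target deviation $\eps s^2/\poly\log(1/\eps)$. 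The polylog factors in $s = O((\log(1/\eps))^2/\eps)$ are there to absorb the combined losses from this union bound and the concentration bound, and I expect the concentration step to be the main technical difficulty of the proof.
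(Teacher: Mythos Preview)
Your proposal has a genuine gap at the concentration step, and the gap is structural rather than technical. The partition $(X,Y)$ of $V$ that you build by greedy extension depends on the \emph{identity} of $S$, not just on its size. So when you later ask ``how many $(X,Y)$-violating edges of $G$ land inside $G_S$?'', the quantity you are trying to concentrate is measured against a partition that was itself manufactured from $S$. You cannot ``fix $(X,Y)$, then sample $S$'': the union bound you propose in item (i) is over a random family of $2^s$ cuts of $V$, and no off-the-shelf $U$-statistic bound applies to such a moving target. Even ignoring this circularity, the exponential tail you ask for in item (ii) is not available when $d$ is large: a single sampled vertex can account for as many as $\min(d,s-1)$ violating edges inside $G_S$, so the bounded-differences/moment machinery only gives deviations of order $\min(d,s)\sqrt{s}$, which swamps the target $\tilde\Omega(\eps)s^2$ once $d\gg \eps n$ (and recall $d\ge \eps n$ is forced). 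The polylog slack in $s$ does not rescue this.

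The paper breaks exactly this circularity by a two-level sample, in the spirit of Alon--Krivelevich. Write $d=k\eps n$ and split $S=S_1\cup\dots\cup S_s\cup T$. Each $S_i$ has expected size $\Theta(\log(1/\eps)/(k\eps))$, small enough that a union bound over its $2^{|S_i|}$ partitions is affordable, yet (by $d$-regularity) large enough to dominate all but an $\eps/4$ fraction of $V$; hence a partition of $S_i$ \emph{deterministically} induces a partition of almost all of $V$. The separate set $T$, independent of $S_i$, then samples edges; Janson's inequality gives the needed exponential tail for the count of induced-violating edges inside $T$. The conclusion is that for each $S_i$, every partition of $S_i\cup T$ has either $\tilde\Omega(1/\eps)$ bad edges inside $T$ or $\Omega(1/(k\eps))$ bad edges between $T$ and $S_i$; summing the latter over $\Theta(k\log(1/\eps))$ copies $S_i$ yields $\tilde\Omega(1/\eps)$ bad edges in $G_S$ for every partition of $S$. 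The essential idea you are missing is to decouple ``the set that determines the global partition'' from ``the set that witnesses the violating edges''; regularity is what makes a set of size only $\tilde O(1/(k\eps))$ suffice for the first role.
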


Let $d = k\eps n$. We will think of the set $S$ as consisting of several parts, namely $S = S_1\cup\dots\cup\/S_s\cup\/T$, where $s = 8k\log(1/\eps)$. The set $S_i$ is chosen by putting each element in it indepenently at random with probability $10\log(1/\eps)/k\eps\/n$. The set $T$ is chosen by putting each element in it independently at random with probability $p = 100\log(1/\eps)/\eps n$. 

Say the pair $(S^*, T)$ is {\em violating} if for every partition of $S^* \cup T$, either $T$ contains $\log(1/\eps)/\eps$ monochromatic edges, or there are at least $1/k\eps$ monochromatic edges between $T$ and $S^*$.

\begin{claim}
Assume $G$ is $\eps$-far from bipartite and $S^*$ dominates all but $\eps/4$ fraction vertices of $G$. Then the pair $(S^*, T)$ is violating with probability at least $1/2$ over the choice of $T$.
\end{claim}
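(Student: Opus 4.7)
The plan is to union-bound over partitions $\pi$ of $S^*$ and, for each $\pi$, to bound the probability over the random $T$ that some extension $c_T$ of $\pi$ to $T$ makes both conditions in the definition of ``violating'' fail simultaneously.

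Fix such a $\pi$. I would extend $\pi$ to a colouring $\pi^*$ of all of $V$ greedily: each dominated $v \in V \setminus S^*$ gets the colour $c_\pi(v)$ that minimizes the number $b_\pi(v)$ of its $S^*$-neighbours of that colour (let $a_\pi(v) \geq b_\pi(v)$ denote the other count), and the at most $\eps n/4$ undominated vertices receive arbitrary colours. Because $G$ is $\eps$-far from bipartite, $\pi^*$ has at least $\eps\binom{n}{2}$ monochromatic edges, which I decompose as $M_1 + M_2 + M_3$ corresponding to $S^*$-internal, $S^*$-to-$(V\setminus S^*)$, and $(V\setminus S^*)$-internal edges. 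Since $G$ is $d$-regular and $\abs{S^*} = \tilde{O}(1/\eps)$, the bound $M_1 \leq \abs{S^*} d/2 = O(kn(\log(1/\eps))^2) = o(\eps n^2)$ for $n$ large leaves $M_2 + M_3 = \Omega(\eps n^2)$, so at least one of them is $\Omega(\eps n^2)$.

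In the case $M_2 = \Omega(\eps n^2)$, the sum $\sum_{v\in T} b_\pi(v)$ has expectation $pM_2 = \Omega(n\log(1/\eps))$, vastly larger than the threshold $1/k\eps$, and a Chernoff bound (using $b_\pi(v)\leq d$) gives concentration. Since this sum is the minimum mono $T$-$S^*$ count over all extensions $c_T$ of $\pi$, condition (b) is then forced for every $c_T$. In the case $M_3 = \Omega(\eps n^2)$, I would first check the greedy $c_T$: mono-in-$T$ under greedy has expectation $p^2 M_3 = \Omega((\log(1/\eps))^2/\eps)$, exceeding $\log(1/\eps)/\eps$ with high probability. For a non-greedy $c_T$ with flip set $F = \{v\in T : c_T(v)\neq c_\pi(v)\}$, the excess mono $T$-$S^*$ over greedy equals $\sum_{v\in F}(a_\pi(v)-b_\pi(v))$, so violating (b) forces this sum below $1/k\eps$. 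Mono-in-$T$ under $c_T$ differs from the greedy value by a signed count of $T$-edges with exactly one endpoint in $F$, bounded by $d\abs{F}$; combined with the constraint above and bounds on how many $v\in T$ can be $\pi$-balanced ($a_\pi(v) = b_\pi(v)$), I expect condition (a) to survive.

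The main obstacle will be matching the per-$\pi$ tail to the union-bound size $2^{\abs{S^*}} = 2^{\Theta((\log(1/\eps))^2/\eps)}$: a direct Chernoff gives per-partition failure only $\exp(-\Omega(\log(1/\eps)/(k\eps)))$, which is tight enough only for $k = O(1/\log(1/\eps))$. I expect to close this gap by exploiting the layered structure $S^* = S_1 \cup \dots \cup S_s$ with $s = 8k\log(1/\eps)$ independent layers, each of size $\tilde{O}(1/(k\eps))$: the factor of $k$ in $s$ is almost certainly there so that each layer gives an independent boost to the concentration (adding roughly a factor of $k$ in the exponent), or so that one can group $\pi$'s by their induced $(a_\pi, b_\pi)$ profile on $V\setminus S^*$ and reduce the effective number of partitions far below $2^{\abs{S^*}}$. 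Making this accounting line up with the exact constants in $\abs{S_i}$, $s$, and $\abs{T}$ is the delicate step of the proof.
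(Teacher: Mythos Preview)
Your central difficulty---the union bound over $2^{\abs{S^*}}$ partitions not matching the per-partition tail $\exp(-\Omega(\log(1/\eps)/(k\eps)))$---comes from a misreading of the setup. In the paper $S^*$ is drawn from the same distribution as a \emph{single} layer $S_i$, so $\abs{S^*}=\Theta(\log(1/\eps)/(k\eps))$, not $\Theta((\log(1/\eps))^2/\eps)$. With the correct size the union bound closes immediately; there is no gap, and the decomposition $S=S_1\cup\dots\cup S_s\cup T$ plays no role inside this claim. The layers enter only \emph{after} the claim is proved: one shows that with constant probability over $T$ at least $s/8$ of the pairs $(S_i,T)$ are violating, and then sums the $1/(k\eps)$ cross-monochromatic edges over those $s/8=k\log(1/\eps)$ layers to reach the $\log(1/\eps)/\eps$ threshold. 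Your plan to ``exploit the layered structure $S^*=S_1\cup\dots\cup S_s$'' inside the claim cannot work as stated, since in the hypothesis $S^*$ is fixed and only $T$ is random.

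The paper's argument is also simpler than your $M_2$/$M_3$ split. It extends $\pi$ to all dominated vertices (as you do), notes that at least $m\geq\eps\binom{n}{2}/2$ monochromatic edges lie among them, and bounds the probability that fewer than $p^2m/2\geq 5\log(1/\eps)/\eps$ of these land inside $T$ via Janson's inequality (not Chernoff---the edge indicators share vertices), giving a tail of $e^{-pm/4d}$. This is union-bounded over the $2^{\abs{S^*}}$ partitions of $S^*$. Separately, with high probability every vertex of $T$ has at most $4k\log(1/\eps)$ neighbours inside $T$. Now if some partition of $S^*\cup T$ had fewer than $\log(1/\eps)/\eps$ monochromatic $T$-edges and fewer than $1/(k\eps)$ monochromatic $T$--$S^*$ edges, recolour each $T$-vertex inconsistent with the induced colouring: at most $1/(k\eps)$ vertices flip, each creating at most $4k\log(1/\eps)$ new monochromatic $T$-edges, for a total below $\log(1/\eps)/\eps+4\log(1/\eps)/\eps=5\log(1/\eps)/\eps$, contradicting the Janson event. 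This single recolouring step replaces your unfinished flip-set analysis (``I expect condition~(a) to survive'').
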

\begin{proof}
Fix a partition of $S^*$ and consider the induced partition on those vertices in $G$ dominated by $S^*$. (That is, we color every such vertex in $G$ using the opposite color of its lexicographically smallest neighbor in $S^*$.)  We now upper bound the probability that among those vertices of $T$ dominated by $S^*$, there are fewer than $5\log(1/\eps)/\eps$ monochromatic edges with respect to the induced partition.

Since $S_i$ fails to dominate only $\eps/4$ fraction of vertices, there must be $m \geq \eps\binom{n}{2}/2$ violating edges among those vertices of $G$ dominated by $S_i$ with respect to the induced partition. The expected number of such violating edges that make it inside $T$ is then at least $p^2m$. By an application of Janson's tail inequality (see for example~\cite{ASmethod}), we find that the probability that fewer than $p^2m/2 \geq 5\log(1/\eps)/\eps$ monochromatic edges make it inside $T$ is at most $e^{-pm/4d} = e^{-4\log(1/\eps)/k\eps}$.

By a union bound, the probability that there exists a partition of $S^*$ such that the induced partition on $T$ has fewer than $5\log(1/\eps)/\eps$ violating edges is at most $1/8$. By another large deviation bound (Chernoff + union bounds), the probability that $T$ has a vertex of degree more than $4k\log(1/\eps)$ is at most $1/8$, so with probability at least $1/4$ neither of these events hold.

Now we argue that if neither event holds then $(S^*, T)$ is a violating pair. Suppose not. Then there is a partition of $S^* \cup T$ where $T$ contains fewer than $\log(1/\eps)/\eps$ monochromatic edges and there are fewer than $1/k\eps$ monochromatic edges between $T$ and $S_i$. If we now change the color of the vertices in the partition of $T$ that are inconsistent with $S_i$ we have introduced at most $(1/k\eps)\cdot(4k\log(1/\eps))$ new violating edges, by the bound on the degree. So we obtain a partition of $S_i \cup T$ with fewer than $\log(1/\eps)/\eps + 4\log(1/\eps)/\eps \leq 5\log(1/\eps)/\eps$ monochromatic edges, contradicting the first assumption.
\end{proof}

Now let $S^*$ be a set chosen from the same distribution as $S_i$. Since $S^*$ dominates all but an $\eps/4$-fraction of vertices with probability at least $3/4$, we get that
\[ \pr_{S^*, T}[\text{$(S^*, T)$ is violating}] \geq 1/2 \]
and
\[ \pr_{T}[\Pr_{S^*}[\text{$(S^*, T)$ is violating}] \geq 1/4] \geq 1/4. \]
Fix a $T$ such that the inner probability is at least $1/4$. By Chebyshev's inequality,
\[ \Pr_{S_1,\dots,S_s}[\text{at least $s/8$ pairs $(S_i, T)$ are violating}] \geq 7/8 \]
So by a union bound,
\[ \Pr_{T,S_1,\dots,S_s}[\text{at least $s/8$ pairs $(S_i, T)$ are violating}] \geq 1/8. \]
We now show that if there are $s/8$ such violating pairs, $S$ must be $\log(1/\eps)/\eps$-removed from bipartite. Consider an arbitrary partition of $S$. If $T$ contains $\log(1/\eps)/\eps$ monochromatic edges then we are done. Otherwise, There are at least $1/k\eps$ monochromatic edges between $T$ and $s/8$ of the sets $S_i$. This also gives $s/8k\eps = \log(1/\eps)/\eps$ monochromatic edges.

To amplify the success probability from $1/8$ to $9/10$ we take $20$ random disjoint copies of $S$.

\bibliographystyle{alpha}
\newcommand{\etalchar}[1]{$^{#1}$}

\end{document}